\theoremstyle{plain}
\newtheorem{definition}{Definition}[section]
\newtheorem{theorem}{Theorem}[section]
\newtheorem{corollary}[theorem]{Corollary}
\newtheorem{lemma}[theorem]{Lemma}
\providecommand{\keywords}[1]{\textbf{\textit{Keywords:}} #1}
\begin{document}
%
%\begin{frontmatter}

\title{Maximum 0-1 Timed Matching on Temporal Graphs\thanks{A preliminary version of this paper \cite{Mandal20} has
appeared in Proceedings of International Conference on Algorithms and Discrete Applied Mathematics (CALDAM 2020).}}
%\tnotetext[t1]{A preliminary version of this paper \cite{Mandal20} has
%appeared in Proceedings of International Conference on Algorithms and Discrete Applied Mathematics (CALDAM 2020).}
%
%\titlerunning{Abbreviated paper title}
% If the paper title is too long for the running head, you can set
% an abbreviated paper title here
%
% \author[smag]{Subhrangsu Mandal\corref{mycorrespondingauthor}}
% \cortext[mycorrespondingauthor]{Corresponding author}
% \ead{subhrangsum@cse.iitkgp.ac.in}

% \author[smag]{Arobinda Gupta}%\corref{mycorrespondingauthor}}
% %\cortext[mycorrespondingauthor]{Corresponding author}
% \ead{agupta@cse.iitkgp.ac.in}

% \address[smag]{Department of Computer Science and Engineering, Indian Institute of Technology Kharagpur, \\ Kharagpur 721302, West Bengal, India.}

 \author{Subhrangsu Mandal \thanks{{\em Department of Computer Science and Engineering, Indian Institute of Technology Kharagpur, Kharagpur 721302, West Bengal, India, subhrangsum@cse.iitkgp.ac.in}.} \and 
 Arobinda Gupta \thanks{{\em Department of Computer Science and Engineering, Indian Institute of Technology Kharagpur, Kharagpur 721302, West Bengal, India, agupta@cse.iitkgp.ac.in}.}}
%\address[arobinda]{Department of Computer Science and Engineering, Indian Institute of Technology Kharagpur, Kharagpur 721302, West Bengal, India.}

\maketitle

\begin{abstract}
Temporal graphs are graphs where the topology and/or other properties of the graph change with time. They have been used to model applications with temporal information in various domains. Problems on static graphs become more challenging to solve in temporal graphs because of dynamically changing topology, and many recent works have explored graph problems on temporal graphs.  In this paper, we define a type of matching called {\em 0-1 timed matching} for temporal graphs, and investigate the problem of finding a {\em maximum 0-1 timed matching} for different classes of temporal graphs. We assume that only the edge set  of the temporal graph changes with time. Thus, a temporal graph can be represented by associating each edge with one or more non-overlapping discrete time intervals for which that edge exists. We first prove that the problem is NP-complete for rooted temporal trees when each edge is associated with two or more time intervals. We then propose an $O(n \log n)$ time algorithm for the problem on a rooted temporal tree with $n$ vertices when each edge is associated with exactly one time interval. The problem is then shown to be NP-complete also for bipartite temporal graphs even when each edge is associated with a single time interval and degree of each vertex is bounded by a constant $k \geq 3$. We next investigate approximation algorithms for the problem for temporal graphs where each edge is associated with more than one time intervals. It is first shown that there is no $\frac{1}{n^{1-\epsilon}}$-factor approximation algorithm for the problem for any $\epsilon > 0$ even on a rooted temporal tree with $n$ vertices unless NP = ZPP. We then present a $\frac{5}{2\mathcal{N}^* + 3}$-factor approximation algorithm for the problem for general temporal graphs where $\mathcal{N^*}$ is the average number of edges overlapping in time with each edge in the temporal graph. The same algorithm is also a constant-factor approximation algorithm for temporal graphs with degree of each vertex bounded by a constant.

\noindent \keywords{ 0-1 timed matching, temporal matching, time dependent matching, temporal graph, time varying graph}
\end{abstract}

%\end{frontmatter}

\section{Introduction}

Graphs are an important tool for modelling systems with a set of objects and pairwise relationships between those objects. In many applications, the objects and the relations between them have temporal properties, resulting in dynamically changing graphs where the topology and/or other properties of the graph change with time. Static graphs are not suitable for modelling such applications as they cannot represent the temporal information. Some examples of domains where such dynamic graphs with temporal properties arise are  networks with intermittent inter-vertex connectivity such as delay tolerant networks (DTN) \cite{xie16}, vehicular ad-hoc networks (VANET) \cite{Feng17}, mobile ad-hoc networks (MANET) \cite{Ferreira10}, social networks \cite{iribarren09, Amblard11, bravo19}, biological networks \cite{lebre10,Han04,rao07}, and transportation networks \cite{lordan20, halpern74}. While modelling these systems, the temporal information in the system must be represented in the graph model used. Temporal graphs \cite{Kostakos09} have been proposed as a tool for modelling such systems with temporal properties.

Solving many well known graph problems on temporal graphs introduces new challenges. For many graph problems, the usual definitions and algorithms for the problem on static graphs do not apply 
directly to temporal graphs. As an example, consider the notion of paths in a graph, which is used extensively in applications using graphs. Unlike static graphs, in a temporal graph, an edge can only be used for traversal if it exists at the time of the traversal, and the definition of a path in a static graph no longer holds in a temporal graph. In a temporal graph, a path from a vertex $u$ to vertex $v$ must contain a sequence of edges from $u$ to $v$ that exist in strictly increasing order of time, unlike a static graph where just a sequence of edges from $u$ to $v$ is sufficient. Various other notions of paths  have been defined on temporal graphs \cite{CasteigtsHMZ20, Xuan03, Wu14}. This makes many problems that use notions of paths  more complex to solve in temporal graphs. Similar examples exist for many other graph related problems. Several works have addressed graph problems such as finding paths and trees \cite{Xuan03, Huang15, MandalG20cctree}, computing dominating sets \cite{Mandal18}, travelling salesman problem \cite{Michail16}, finding vertex separator \cite{zschoche20}, computing sparse spanners \cite{casteigts19}, etc. on temporal graphs. 

In this paper, we investigate the problem of finding matchings in a temporal graph. A matching in a static graph is defined as a subset of edges of the graph such that no two edges share a common vertex. A maximum matching is a matching with the maximum cardinality among all matchings. Finding a maximum matching for a static graph \cite{Hopcroft71, Micali80} is a well-studied problem in the area of static graphs due to its wide application. Assignment problems \cite{bertsekas81} form an important class of applications that use the notion of maximum matching. Some example applications of assignment problems that occur in dynamic graph topologies are task assignment problem in ad allocation \cite{bhalgat12, zhang18}, crowdsourcing market \cite{Ho12}, dynamic assignment problem \cite{spivey04} etc. Some of these applications may require that assignments be available for all applicable time periods. As an example, consider a problem of matching the maximum number of consumers with service providers, where each consumer requests for service in a number of time slots, and may want to accept a matched service provider only if it can get the service at all its requested time slots. The traditional definition of matching for static graphs is not enough to address such assignment problems due to its temporal nature. However, this can be modelled by a temporal graph with edges between consumers and service providers, with each edge marked with the time intervals for which the service is requested. An allocation is valid if it ensures that no two consumers allocated to a service provider has any common requested time slot. In this paper, we define a type of matching called  \textit{0-1 timed matching} for a temporal graph to model such situations, and investigate the complexity and algorithms for finding a \textit{maximum 0-1 timed matching} for different types of temporal graphs. We have assumed that only the edge set of the temporal graph changes with time. Thus, a temporal graph can be represented by labelling each edge with non-overlapping discrete time intervals for which that edge exists. The underlying graph of a temporal graph is a static graph for which the vertex set includes all the vertices of the temporal graph and the edge set includes each edge which is present in the temporal graph for at least one timestep. The specific contributions of this paper are as follows.
\begin{enumerate}
 \item We prove that the problem of finding a maximum 0-1 timed matching for a rooted temporal tree is NP-complete when each edge of the tree is associated with $2$ or more time intervals.
 \item We show that the problem is solvable in polynomial time if each edge of the rooted temporal tree is associated with a single time interval. In particular, we propose a dynamic programming based $O(n \log n)$ time algorithm to solve the problem on such a rooted temporal tree with $n$ vertices.
 \item Next, we study the computational complexity of the problem when each edge of the temporal graph is associated with a single time interval. We prove that the problem is NP-complete in this case even for bounded degree bipartite temporal graphs when degree of each vertex is bounded by $3$ or more. This automatically proves that the problem is NP-complete for a bipartite temporal graph with a single time interval per edge, and hence, for a general temporal graph with a single time interval per edge.  
 \item We investigate the hardness of approximation of the problem when each edge of the temporal graph is associated with multiple time intervals. We prove that there is no approximation algorithm with approximation ratio $\frac{1}{n^{1-\epsilon}}$, for any $\epsilon > 0$, for finding a maximum 0-1 timed matching even on a rooted temporal tree with $n$ vertices when each edge is associated with multiple time intervals unless NP = ZPP.
 \item We propose an approximation algorithm to address the problem for a temporal graph when each edge is associated with multiple time intervals. The approximation ratio of the proposed algorithm is $\frac{5}{2 \mathcal{N}^* + 3}$ where $\mathcal{N^*}$ is the average number of edges overlapping with each edge in the temporal graph. Two edges are overlapping with each other if both are incident on the same vertex and there exists at least one timestep when both the edges exist\footnote{Formal definition of overlapping edges is given in Section \ref{defin}.}. We also show that the same algorithm is a constant factor approximation algorithm for a temporal graph when degree of each vertex is bounded by a constant. 
 \end{enumerate}

The rest of this paper is organised as follows. Section \ref{relWork} describes some related work in the area. Section \ref{model} describes the system model used. Section \ref{defin} formally defines the problem. Section \ref{matchingTree} presents the results related to the problem of finding a maximum 0-1 timed matching for a rooted temporal tree. Section \ref{bipartite} presents the results related to the problem of finding a maximum 0-1 timed matching for a general temporal graph. Finally Section \ref{conclusion} concludes the paper.

\section{Related Work}
\label{relWork}

The problem of finding a maximum matching is a well studied problem for static graphs. In \cite{Edmonds65}, Edmonds proposed a $O(n^4)$ time algorithm, where $n$ is the number of vertices in the input graph. Since then, many algorithms have been proposed to address the problem on both general graphs and other restricted classes of graphs \cite{Hopcroft71,Micali80,EvenT75,Even75,Kameda74,Cheriyan97,Mucha04,Mucha06,Mulmuley87}. One generalisation of the maximum matching problem is the maximum packing problem where the goal is to find the maximum number of vertex disjoint subgraphs of a given graph such that each subgraph is isomorphic to an element of a given class of graphs. This is a well studied problem for static graphs, with many different variants of this problem addressed for both general graphs and other restricted classes of graphs \cite{kierstead09, loebl90, moser09, garcia02}.
% The works in \cite{Hopcroft71,EvenT75} have proposed algorithms to address the problem on bipartite graphs. Several works \cite{Micali80,Even75,Kameda74} have also been done to find a maximum matching on a general graph. There are other works \cite{Cheriyan97,Mucha04,Mucha06,Mulmuley87} in literature, which have proposed matrix based algorithms to address the problem for different kinds of static graphs.     
% After that, several works \cite{Even75}, \cite{Hopcroft71}, \cite{Kameda74}, \cite{Micali80} has been done to address the problem 
% on finding a maximum matching on both general graph and other restricted class of graphs with improved running time. There are 
% several other works \cite{Cheriyan97}, \cite{Mucha04}, \cite{Mucha06} which have proposed matrix operation based algorithm to 
% address the problem for different kind of static graphs. 

%Maintaining the maximum matching problem for dynamic graphs has also received attention of researchers. There are several works \cite{Baswana11,Onak10} in literature which have proposed algorithms to maintain already computed maximum matching for a graph under addition and/or deletion of edges and/or nodes. There are other works like \cite{Bernstein15,Bhattacharya15} also which maintain the already computed maximum matching approximately on fully dynamic graphs.  

Matchings in temporal graphs have recently attracted the attention of researchers. One variant of the matching problem on temporal graphs is {\em multistage matching} \cite{chimani20}. In this problem, the temporal graph is viewed as a sequence of static graphs, one for each timestep of the total duration, with the static graph for timestep $t$ containing all the vertices and edges that exist at $t$. A sequence of maximum matchings is then found, one for each static graph in this representation of the temporal graph. The work in  \cite{chimani20} attempts to optimize different parameters such as {\em intersection profit} and {\em union cost} while finding this sequence of matchings. In the case of intersection profit, the goal is to minimize the number of changes incurred to construct a maximum matching for a static graph at a certain timestep from a maximum matching for the static graph at the previous timestep. In the case of union cost, the goal is to minimize the total number of edges in the union of the maximum matchings for the graphs at each timestep. Several other works \cite{gupta14, bampis18} have studied different variants of multistage matchings.   

Other than multistage matching, a few other definitions of matchings for temporal graphs are available in literature. For a given temporal graph, Michail et al. \cite{Michail16} consider the decision problem of finding if there exists a maximum matching $M$ in the underlying graph such that a single label can be assigned to each edge of $M$ with the constraint that the assigned label to an edge is chosen from the timesteps when that edge exists in the temporal graph and no two edges of $M$ are assigned the same label. It is then proved that the problem is NP-hard. Baste et al. \cite{Baste19}  define another type of temporal matching called $\gamma$-matching. $\gamma$-edges are defined as edges which exist for at least $\gamma$ consecutive timesteps. The maximum $\gamma$-matching is defined as a maximum cardinality subset of $\gamma$-edges such that no two $\gamma$-edges in the subset share any vertex at any timestep. They have shown that the problem of finding a maximum $\gamma$-matching is NP-hard when $\gamma > 1$, and proposed a 2-approximation algorithm for the problem. Mertzios et al. \cite{mertzios20} define another type of temporal matching called $\Delta$-matching where two edge instances at timesteps $t, t'$ can be included in a matching if either those two edge instances do not share any vertex or $|t - t'|$ is greater than or equal to a positive integer $\Delta$. They prove that this problem is APX-hard for any $\Delta \geq 2$ when the lifetime of the temporal graph is at least $3$, and the problem remains NP-hard even when the underlying graph of the temporal graph is a path. An approximation algorithm is proposed to find a $\frac{\Delta}{2\Delta - 1}$-approximate maximum $\Delta$-matching for a given temporal graph with $n$ vertices, $m$ edges and lifetime $\mathcal{T}$ in $O(\mathcal{T}m(\sqrt{n}+ \Delta))$ time. 
In \cite{akrida20}, Akrida et al. address the maximum matching problem on stochastically evolving graphs represented using stochastic arrival departure model. In this model, each vertex in the temporal graph arrives and departs at certain times, and these arrival and departure times of each vertex are determined using independent probability distributions. The vertex exists in the time interval between the arrival time and the departure time. An edge between two vertices can exist if there is an intersection between the time intervals for which those two vertices exist. A matching on a stochastically evolving graph is defined as the subset of edges such that no two edges are incident on the same vertex. A fully randomized approximation scheme (FPRAS) has been derived to approximate the expected size of maximum cardinality matching on a stochastically evolving graph. A probabilistic optimal algorithm is proposed when the model is defined over two timesteps. They have also defined price of stochasticity and proved that the upper bound on the price of stochasticity is $\frac{2}{3}$. 

In this paper, we propose another type of matching for temporal graphs called {\em 0-1 timed matching}. We then investigate the complexity and algorithms for finding a maximum 0-1 timed matching in temporal graphs. %To the best of our knowledge, there is no prior work which have addressed this problem for temporal graphs. \subh{Can we keep this line? Because conference version of this paper has addressed this problem.}

\section{System Model}
\label{model}
We represent a temporal graph by the {\em evolving graphs} \cite{Ferreira02} model. In this model, a temporal graph is represented as a finite sequence of static graphs, each static graph being an undirected graph representing the graph at a discrete timestep. The total number of timesteps is called the {\em lifetime} of the temporal graph. In this paper, we assume that the vertex set of the temporal graph remains unchanged throughout the lifetime of the temporal graph; only the edge set changes with time. For simplicity, it is assumed that all the changes in the edge set are known a-priori (this assumption has also been used in other existing works \cite{Xuan03, mertzios20}). Also, there are no self-loops and at most one edge exists between any two vertices at any timestep. Thus, a temporal graph $\mathcal{G = (V, E)}$ with set of vertices $\mathcal{V}$ and set of edges $\mathcal{E}$ is represented as a sequence of graphs $(G_0, G_1,$ $\cdots, G_{\mathcal{T} - 1})$, where $G_i = (\mathcal{V}, \mathcal{E}_i)$ is the static graph at timestep $i$ with set of edges $\mathcal{E}_i$ that exist at timestep $i$, and $\mathcal{T}$ is the lifetime of $\mathcal{G}$. As only the edge set changes with time, each edge in $\mathcal{E}$ of a temporal graph $\mathcal{G}$ can be represented by specifying the time intervals for which the edge exists. Thus an edge $e \in \mathcal{E}$ between vertices $u$ and $v$ can be represented as $e(u, v, (s_1, f_1),$ $(s_2, f_2), \cdots, (s_k, f_k))$, where $u, v$ $\in$ $\mathcal{V}, u \neq v$, $f_k \leq \mathcal{T}$ and a pair $(s_i, f_i)$ indicates that the edge exists for the time interval $[s_i, f_i)$, where $0 \leq s_i < f_i \leq \mathcal{T}$. In other words, each interval $(s_i, f_i)$ associated with edge $e$ between vertices $u$ and $v$ denotes that the edge $e$ exists in all the static graphs $(G_{s_i}, G_{s_i+1}, \cdots, G_{f_i - 1})$. Also, if an edge $e$ has two such pairs $(s_i, f_i)$ and $(s_j, f_j)$, $s_i \neq s_j$ and if $s_i$ $<$ $s_j$ then $f_i$ $<$ $s_j$. Thus, the maximum number of time intervals for an edge can be $\lfloor \frac{\mathcal{T}}{2} \rfloor$. An edge at a single timestep is called an instance of that edge. For simplicity, we also denote the edge $e$ between vertices $u, v \in \mathcal{V}$ by $e_{uv}$ when the exact time intervals for which $e$ exists are not important. The corresponding instance of $e_{uv}$ at time $t$ is denoted by $e^t_{uv}$. 

\section{Problem Definition}
\label{defin}
In this section, we define a 0-1 timed matching for temporal graphs. We first define some terminologies related to temporal graphs that we will need. 

For a given temporal graph $\mathcal{G = (V, E)}$ with lifetime $\mathcal{T}$, the {\em underlying graph} of $\mathcal{G}$ is defined as the static graph $\mathcal{G}_U = (\mathcal{V}, \mathcal{E}_U)$, where $\mathcal{E}_U =  \{(u, v)\,|\, \exists t$ such that, $e^t_{uv}$ is an instance of $e_{uv} \in \mathcal{E} \}$. Next, we define different types of temporal graphs. 

\begin{definition}
 \textbf{Temporal Tree:} A temporal graph $\mathcal{G}$ is a temporal tree if the underlying graph of $\mathcal{G}$ is a tree. 
\end{definition}

\begin{definition}
 \textbf{Rooted Temporal Tree:} A rooted temporal tree $\mathcal{G = (V, E)}$ rooted at vertex $r$ is a temporal tree with one vertex $r \in \mathcal{V}$ chosen as root of the tree.
\end{definition}

Note that, the underlying graph of a rooted temporal tree is also a rooted tree. For any vertex $v$, let $p(v)$ denote the parent vertex of $v$ and $child(v)$ denote the set of children vertices of $v$ in the underlying graph of the rooted temporal tree. For the root vertex $r$, $p(r) = NULL$. {\em Depth} of a vertex $v$ in a rooted temporal tree $\mathcal{G}$ rooted at $r$ is the path length from $r$ to $v$ in $\mathcal{G}_U$. {\em Height} of a rooted temporal tree $\mathcal{G}$ is the maximum depth of any vertex in $\mathcal{G}$.

 \begin{figure}
 \begin{center}
  \includegraphics[scale=.50]{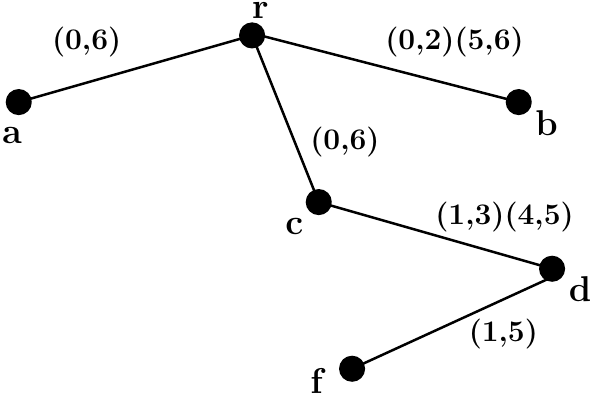}
  \caption{A temporal tree rooted at vertex $r$}
  \label{fig:tree}
 \end{center} 
 \end{figure}

Figure \ref{fig:tree}, shows a temporal tree rooted at vertex $r$. In this temporal tree $p(a), p(b), p(c)$ is $r$ and $child(r) = \{a, b, c\}$. The depth of vertex $r$ is $0$. Depth of $a, b, c$ is $1$, depth of $d$ is $2$ and depth of $f$ is $3$. The height of this rooted temporal tree is $3$.

\begin{definition}
 \textbf{Bipartite Temporal Graph:} A temporal graph $\mathcal{G}$ is bipartite if the underlying graph of $\mathcal{G}$ is a bipartite graph.
\end{definition}

\begin{definition}
 \textbf{Bounded Degree Temporal Graph:} A temporal graph $\mathcal{G}$ is a bounded degree temporal graph where degree of each vertex is bounded by some positive integer $k$, if the underlying graph of $\mathcal{G}$ is a bounded degree graph where degree of each vertex is bounded by $k$.
\end{definition}

\begin{definition}
 \textbf{Bounded Degree Bipartite Temporal Graph:} A temporal graph $\mathcal{G}$ is a bounded degree bipartite temporal graph if it is both a bipartite temporal graph and a bounded degree temporal graph.
\end{definition}

\begin{definition}
 \textbf{Overlapping Edge:} Given a temporal graph $\mathcal{G = (V, E)}$, any edge $e_{vw} \in \mathcal{E}$ is said to be overlapping with another edge $e_{uv}$ if there exists a timestep $t$ such that both $e^t_{vw}$ and $e^t_{uv}$ exist.  
\end{definition}

Note that if $e_{uv}$ is overlapping with edge $e_{vw}$, then $e_{vw}$ is also overlapping with $e_{uv}$. We refer to such pair of edges as {\em edges overlapping with each other}. 

In Figure \ref{fig:bipartite}, $e_{dg}$ is an overlapping edge with $e_{fg}$ because both edges are incident on $g$ and both $e^1_{dg}$ and $e^1_{fg}$ exist. On the other hand, edges $e_{ab}$ and $e_{ad}$ are incident on the same vertex $a$, but there is no timestep $t$ when both $e^t_{ab}$ and $e^t_{ad}$ exist. Thus $e_{ab}$ and $e_{ad}$ are {\em non-overlapping with each other}. For any two sets of edges $E_1$, $E_2$, if $E_1 \subseteq E_2$ and no two edges in $E_1$ are overlapping with each other, then $E_1$ is called a {\em non-overlapping subset} of $E_2$.

\begin{definition}
	\textbf{Overlapping Number of Edge $e_{uv}$:} Given a temporal graph $\mathcal{G = (V, E)}$, the overlapping number of an edge $e_{uv} \in \mathcal{E}$, denoted by $\mathcal{N}(e_{uv})$, is the number of edges overlapping with $e_{uv}$ in $\mathcal{E}$. 
\end{definition}

\begin{definition}
 \textbf{0-1 Timed Matching:} A 0-1 timed matching $M$ for a given temporal graph $\mathcal{G = (V, E)}$ is a non-overlapping subset of $\mathcal{E}$. 
\end{definition}

\begin{definition}
 \textbf{Maximum 0-1 Timed Matching:} A maximum 0-1 timed matching for a given temporal graph is a 0-1 timed matching with the maximum cardinality.
\end{definition}

\begin{definition}
 \textbf{Maximal 0-1 Timed Matching:} A 0-1 timed matching $M$ for a given temporal graph $\mathcal{G = (V, E)}$ is maximal if for any edge $e_{uv} \in \mathcal{E} \setminus M$, $M \cup \{e_{uv}\}$ is not a 0-1 timed matching for $\mathcal{G}$.
\end{definition}

 \begin{figure}
 \begin{center}
  \includegraphics[scale=.70]{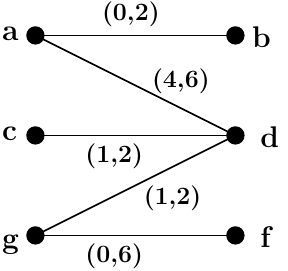}
  \caption{A bipartite temporal graph}
  \label{fig:bipartite}
 \end{center} 
 \end{figure}
 For the bipartite temporal graph $\mathcal{G}$ shown in Figure \ref{fig:bipartite}, a maximum 0-1 timed matching $M$ is $\{e_{ab}, e_{ad}, e_{cd}, e_{fg}\}$. $M$ is also a maximal 0-1 timed matching for $\mathcal{G}$. Consider another 0-1 timed matching $M' = \{e_{ab}, e_{ad}, e_{dg}\}$ for $\mathcal{G}$. $M'$ is a maximal 0-1 timed matching for $\mathcal{G}$ but not a maximum 0-1 timed matching. Note that, the edges in a 0-1 timed matching for a given temporal graph may not be a matching for its underlying graph. 

 In the next section, we explore the problem of finding a maximum 0-1 timed matching for a given rooted temporal tree.

\section{Finding a Maximum 0-1 Timed Matching for Rooted Temporal Tree} %with Single Time Interval per Edge }
\label{matchingTree}

In this section, we first analyse the hardness of computing a maximum 0-1 timed matching for a given rooted temporal tree $\mathcal{G = (V, E)}$. We show that this problem is NP-complete for $\mathcal{G}$ when each edge in $\mathcal{E}$ is associated with $2$ or more time intervals. We then explore the problem when each edge of $\mathcal{E}$ is associated with a single time interval. We find that this problem is solvable in polynomial time and propose a dynamic programming based algorithm for it.

\subsection{Complexity of Finding a Maximum 0-1 Timed Matching for Rooted Temporal Tree}

We first show that the problem of finding a maximum {\em 0-1 timed matching} is NP-complete for a rooted temporal tree even 
when the number of intervals associated with each edge is at most $2$. We refer to the problem of finding a maximum 0-1 timed 
matching for a rooted temporal tree when each edge is associated with at most $2$ time intervals as {\em MAX-0-1-TMT-2}. We first 
define the decision version of MAX-0-1-TMT-2, referred to as the {\em D-MAX-0-1-TMT-2} problem.

\begin{definition}
 \textbf{D-MAX-0-1-TMT-2:} Given a rooted temporal tree $\mathcal{G = (V, E)}$ with lifetime $\mathcal{T}$, where each edge in $\mathcal{E}$ is associated with at most $2$ time intervals, and a positive integer $g$, does there exist a 0-1 timed matching $M$ for $\mathcal{G}$ such that $|M| = g$?
\end{definition}

We show that there is a polynomial time reduction from the decision version of the problem of finding a {\em maximum rainbow matching for a properly edge coloured path}, referred to as D-MAX-RBM-P \cite{Le14}, to the D-MAX-0-1-TMT-2 problem. Before describing the details of the reduction, we define a {\em properly edge coloured path} and the D-MAX-RBM-P problem. 

\begin{definition}
	\textbf{Properly Edge Coloured Path:} A path $P$ is a properly edge coloured path if each edge of $P$ is coloured in such a way that no two edges incident on the same vertex are coloured with the same colour.
\end{definition}	

\begin{definition}
 \textbf{D-MAX-RBM-P:} Given a properly edge coloured path $P = (V, E)$ and a positive integer $h$, does there exist a set $R \subset E$ of size $h$, such that $R$ is a matching for $P$ and no two edges in $R$ are coloured with the same colour?
\end{definition}

\noindent The D-MAX-RBM-P problem is known to be NP-complete \cite{Le14}.

\begin{theorem}
 D-MAX-0-1-TMT-2 is NP-complete.
 \label{thm:NPTree}
\end{theorem}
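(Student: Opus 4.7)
The plan is first to verify membership in NP and then to give a polynomial-time reduction from D-MAX-RBM-P. NP-membership is direct: a candidate $M \subseteq \mathcal{E}$ is a 0-1 timed matching of size $g$ exactly when $|M|=g$ and, for every pair of edges of $M$, they fail to share both an endpoint and a common timestep; both conditions are checkable in polynomial time directly from the interval representation.

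For NP-hardness, I would reduce from D-MAX-RBM-P. Given a properly edge-coloured path $P = v_1 v_2 \cdots v_n$ with edges $e_i = (v_i, v_{i+1})$ coloured $c_i \in \{1,\ldots,k\}$, and an integer $h$, I construct a rooted temporal tree $\mathcal{G}$ that is a star with root $r$ and leaves $a_1,\ldots,a_{n-1}$. The tree edge between $r$ and $a_i$ is equipped with exactly two time intervals: a \emph{position interval} $[2i-1,\,2i+2)$, arranged so that position intervals for indices $i$ and $j$ overlap iff $|i-j|\le 1$; and a \emph{colour interval} $[2n+c_i,\,2n+c_i+1)$, arranged so that colour intervals for $i$ and $j$ overlap iff $c_i = c_j$. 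The offset $2n$ keeps the two blocks disjoint and ensures that the two intervals of a single edge satisfy the strict separation $f_1 < s_2$ required by the evolving-graphs model. Setting $g = h$ yields an instance built in polynomial time in $|V|+|E|$.

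Correctness rests on a single observation. Every pair of tree edges shares the root $r$, so two tree edges are overlapping precisely when their time intervals meet; by the arithmetic above, this happens iff the corresponding path edges $e_i, e_j$ share an endpoint in $P$ or receive the same colour. Therefore a size-$h$ non-overlapping subset of $\{ra_1,\ldots,ra_{n-1}\}$ corresponds bijectively with a size-$h$ rainbow matching of $P$, which gives the equivalence in both directions.

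I expect the only delicate point to be designing the pair of intervals so that both the ``path-adjacency'' and ``same-colour'' conflicts are encoded simultaneously while staying within at most two intervals per edge and respecting the strict separation rule of the model. The explicit choice above achieves this cleanly, and the remainder of the verification is routine bookkeeping.
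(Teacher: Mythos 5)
Your proposal is correct and follows essentially the same route as the paper: NP membership by pairwise overlap checking, and a reduction from D-MAX-RBM-P to a star rooted at $r$ with one leaf edge per path edge, carrying two time intervals — one encoding adjacency along the path and one encoding the colour — with $g = h$. The only difference is the exact interval arithmetic (your spacing-by-two positions and $2n$-shifted colour block versus the paper's length-two intervals at consecutive integers and colours labelled $n,\dots,n+c-1$), which is immaterial to the argument, and in fact handles the model's strict separation condition between an edge's two intervals slightly more cleanly.
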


\begin{proof}
 We first show that the problem is in NP. Consider a certificate $\langle \langle \mathcal{G = (V,E)},$ $g \rangle, M\rangle$, where $\mathcal{G}$ is a rooted temporal tree with lifetime $\mathcal{T}$, each edge in $\mathcal{E}$ is associated with at most 2 time intervals, $g$ is a given integer and $M$ is a subset of $\mathcal{E}$. We consider one edge $e_{uv} \in M$ at a time and compare associated time intervals of $e_{uv}$ with associated time intervals of all the other edges in $M$ to find any edge overlapping with $e_{uv}$ in $M$. We perform this check for all the edges in $M$ to find any such overlapping edges with each other. This checking can be done in polynomial time. Whether $|M| = g$ can also be easily checked in polynomial time. Hence, the D-MAX-0-1-TMT-2 problem is in NP.

 Next, we prove that there is a polynomial time reduction from the D-MAX-RBM-P problem to the D-MAX-0-1-TMT-2 problem. Consider an instance $\langle P = (V, E), h \rangle$ of the D-MAX-RBM-P problem where $P$ is a properly edge coloured path, $V = \{v_0, v_1,\cdots,v_{n-1}\}$, $|V| = n$, $E = \{(v_0, v_1), (v_1, v_2), \cdots,$ $(v_{n-2}, v_{n-1})\}$, $|E| = n-1$, and $h$ is a positive integer. For our reduction, we assign a sequence of positive integers $A = \{a_i\,|\,a_i \in \mathbb{N}, a_i < n\}$, to vertices in $V$ in the following way. We assign $a_i = i$ to the vertex $v_i$. The path $P$ is a properly edge coloured path. Let $c$ be the number of different colours used to colour $P$. As the number of edges in $P$ is $n-1$, then $c \leq n-1$. We represent these colours with different integers from $n$ to $n+c-1$.

 From this given instance of the D-MAX-RBM-P problem, we construct an instance of the D-MAX-0-1-TMT-2 problem as follows. 
 \begin{itemize}
    \item The temporal graph $\mathcal{G = (V, E)}$ is constructed as follows.
    \begin{itemize}
        \item We add a vertex $\nu(v_iv_{i+1})$ for each edge $(v_i, v_{i+1}) \in E$. Additionally we add a vertex $r$. Thus, 
        
        $\mathcal{V}$ $:=$ $\{\nu(v_iv_{i+1})\,|\,\forall (v_i, v_{i+1}) \in E\} \cup \{r\}$.
        \item We add an edge between each added vertex $\nu(v_iv_{i+1}) \in \mathcal{V}$ and $r$. This edge exists for time intervals $(a_i, a_i+2)$ and $(c_j, c_j+1)$, where $a_i$ is the integer assigned to $v_i$, $c_j$ is the integer representing the colour by which $(v_i, v_{i+1})$ is coloured. Thus,
        
        $\mathcal{E}$ $:=$ $\{e(\nu(v_iv_{i+1}), r, (a_i, a_i+2), (c_j, c_j+1))\,|\, \forall (v_i, v_{i+1}) \in E\}$,
        \item $\mathcal{T}$ $:=$ $n+c$
    \end{itemize}
    \item $g := h$
 \end{itemize}
 \begin{figure}
 \begin{center}
  \includegraphics[scale=.50]{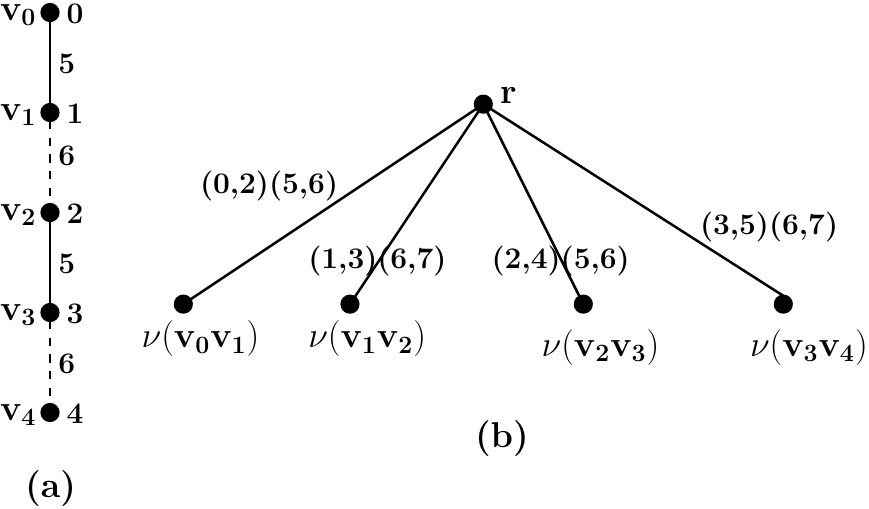}
  \caption{(a) A properly edge coloured path where vertices are assigned some integers (dashed and continuous lines are representing two different colours; integer along an edge is representing the colour of it), (b) Corresponding temporal tree rooted at vertex $r$.}
  \label{RBMatching}
 \end{center} 
 \end{figure}
 
 As each edge in $\mathcal{E}$ connects a vertex in $\mathcal{V} \setminus \{r\}$ to $r$, $\mathcal{G}$ is a temporal tree rooted at $r$ (we choose $r$ as the root vertex). According to the construction, the number of intervals associated with each edge in $\mathcal{E}$ is $2$. Any edge $e(u, v, (s_1, f_1) (s_2, f_2)) \in \mathcal{E}$ is also denoted as $e_{uv}$ when the time intervals for which this edge exists are not important. Figure \ref{RBMatching} shows the construction of a rooted temporal tree from a properly edge coloured path.  

 We first show that, if there is a solution for the instance of the D-MAX-0-1-TMT-2 problem, then there is a solution for the instance of the D-MAX-RBM-P problem. For a 0-1 timed matching $M$, $|M| = g$, for $\mathcal{G}$, we  construct a rainbow matching $R$, $|R| = h$, for $P$ as follows. Consider the set of edges $R = \{(v_i, v_{i+1})\,|\,e_{r\nu(v_iv_{i+1})} \in M \}$. As $|M| = g$ and $g = h$, $|R| = h$. We prove that $R$ is a rainbow matching for $P$. We prove this by contradiction. Assume that, $R$ is not a rainbow matching for $P$. This is possible in two cases.
 \begin{enumerate}[I.]
     \item There are at least two edges $(v_{i-1}, v_i), (v_i, v_{i+1})$ incident on the same vertex $v_i$ included in $R$. As $P$ is a path, according to the construction, $v_{i-1}, v_i, v_{i+1}$ are assigned three consecutive integers $a_{i-1}, a_i, a_{i+1}$ respectively. This implies that, time intervals $(a_{i-1}, a_{i-1}+2)$ and $(a_i, a_i+2)$ are associated with edges $e_{r\nu(v_{i-1}v_i)}$ and $e_{r\nu(v_iv_{i+1})}$ respectively, and both are included in $M$. As $a_i = a_{i-1}+1$, edges $e_{r\nu(v_{i-1}v_i)}$ and $e_{r\nu(v_iv_{i+1})}$ exist at the timestep $a_i$. This contradicts the fact that $M$ is a 0-1 timed matching for $\mathcal{G}$.
     \item There are two edges $(v_i, v_{i+1}), (v_j, v_{j+1})$ which are coloured with the same colour, $c_k$. This implies that edges $e_{r\nu(v_iv_{i+1})}$ and $e_{r\nu(v_jv_{j+1})}$ are included in $M$ when both exist at timestep $c_k$. This also contradicts the fact that $M$ is a 0-1 timed matching for $\mathcal{G}$. 
 \end{enumerate}
 
 Next, we show that if there is no solution for the instance of the D-MAX-0-1-TMT-2 problem, then there is no solution for the instance of the D-MAX-RBM-P problem. To show this, we prove that if we have a solution for the instance of the D-MAX-RBM-P problem, then we have a solution for the instance of the D-MAX-0-1-TMT-2 problem. For a rainbow matching $R$, $|R| = h$, for $P$, we construct a 0-1 timed matching $M$, for $\mathcal{G}$ as follows. Consider the set $M = \{e_{r\nu(v_iv_{i+1})} \,|\, (v_i, v_{i+1}) \in R\}$. As $|R| = h$ and $h = g$, $|M| = g$. Next we show that, $M$ is a 0-1 timed matching for $\mathcal{G}$. We prove this by contradiction. Assume that, $M$ is not a 0-1 timed matching for $\mathcal{G}$. This implies that, there are at least two edges $e_{r\nu(v_iv_{i+1})}$ and $e_{r\nu(v_jv_{j+1})}$ in $M$ such that there exists at least one timestep $t$ when both the edges exist. According to our construction of $\mathcal{G}$, $t < n+c$. There can be two possible cases. 
 \begin{enumerate}[I.]
     \item {\em $0 \leq t \leq n-1$:} As $0 \leq t \leq n-1$, $t$ is assigned to some vertex in $V$. According to our assumption, edges $e_{r\nu(v_iv_{i+1})}$ and $e_{r\nu(v_jv_{j+1})}$ in $M$ and both exist at timestep $t$. This implies that, timestep $t$ is present in both intervals $(a_i, a_i+2)$ and $(a_j, a_j+2)$ which are associated with $e_{r\nu(v_iv_{i+1})}$ and $e_{r\nu(v_jv_{j+1})}$ respectively (according to construction of $\mathcal{G}$). Without loss of generality, we can assume that $a_i < a_j$. In this scenario, $a_i+1 = a_j = t$. Then, according to the assignment of integers to the vertices in $V$, $v_{i+1} = v_j$. This implies that, the edges $(v_i, v_{i+1}), (v_j, v_{j+1}) \in R$ are incident on the same vertex $v_{i+1}$. This contradicts the fact that $R$ is a rainbow matching for $P$.  
     \item {\em $n \leq t \leq n+c-1$:} As $n \leq t \leq n+c-1$, $t$ represents one colour which is used to colour the edges in $E$. According to the assignment of integers to the colours of each edge, $n \leq t \leq n+c-1$ implies that edges $(v_i, v_{i+1}), (v_j, v_{j+1}) \in R$ are coloured with the same colour which is represented by $t$. This also contradicts the fact that $R$ is a rainbow matching for $P$.
 \end{enumerate}
 This completes the proof of this theorem.
\end{proof}

\subsection{Finding a Maximum 0-1 Timed Matching for a Rooted Temporal Tree with Single Time Interval per Edge}
We present a dynamic programming based algorithm for finding a maximum 0-1 timed matching for a rooted temporal tree $\mathcal{G = (V,E)}$ with root $r \in \mathcal{V}$ where each edge in $\mathcal{E}$ is associated with a single time interval. 

Consider a vertex $v \in \mathcal{V}$ with parent $p(v)$. Let $\mathbb{T}_v = (\mathbb{V}_v, \mathbb{E}_v)$ be the temporal subtree rooted at $v$. Consider the temporal tree  $T_v = (\mathcal{V}_v, \mathcal{E}_v)$, where $\mathcal{V}_v = \mathbb{V}_v \cup \{p(v)\}$, $\mathcal{E}_v = \mathbb{E}_v \cup \{e_{vp(v)}\}$. For the purpose of computing the depths of vertices in $T_v$, we consider $p(v)$ as the root of $T_v$. Let $M_v$ denote a maximum 0-1 timed matching for $T_v$.  For any non-root vertex $v$,  two cases are possible for $M_v$.

\begin{enumerate}
	\item Edge $e_{vp(v)}$ is not included in $M_v$.
	\item Edge $e_{vp(v)}$ is included in $M_v$.
\end{enumerate}

Let $TM1[v]$ and $TM2[v]$ denote the maximum 0-1 timed matchings for $T_v$ which does not include edge $e_{vp(v)}$  (Case 1), and which includes edge $e_{vp(v)}$ (Case 2) respectively. In the rest of this paper, $T_{u_i}=(\mathcal{V}_{u_i}, \mathcal{E}_{u_i})$ denotes the temporal tree with vertex set $\mathcal{V}_{u_i} = \mathbb{V}_{u_i} \cup \{p(u_i)\}$, and edge set $\mathcal{E}_{u_i} = \mathbb{E}_{u_i} \cup \{e_{u_ip(u_i)}\}$ where $\mathbb{T}_{u_i} = (\mathbb{V}_{u_i}, \mathbb{E}_{u_i})$ is the temporal subtree rooted at $u_i$. The algorithm first orders the vertices in $\mathcal{V}$ in non-increasing order of depths. It then computes $TM1[v]$ and $TM2[v]$ for each $T_v$, $v \in \mathcal{V}$ in this order.

We first describe the method for computing $TM1[v]$ for $T_v$ when for all $T_{u_i}$, $u_i \in child(v)$, $TM1[u_i]$ and $TM2[u_i]$ are already computed. Note that for any $T_{u_i}$, $u_i \in child(v)$, if $|TM2[u_i]| < |TM1[u_i]| + 1$, then there is a maximum 0-1 timed matching for $T_v$ which does not include $e_{vu_i}$\footnote{This statement is formally proved in Lemma \ref{lem:lessTM2} later in the paper.}. Thus, for any $u_i \in child(v)$, the edge $e_{vu_i}$ can be considered for inclusion in $M_v$ only if $|TM2[u_i]| = |TM1[u_i]| + 1$. Based on this, we define the set of {\em feasible edges} as follows.

\begin{definition}
    \label{def:feasible}
	\textbf{Feasible Edges for $T_v$:} The feasible edges for $T_v$, denoted by $F_v$, is the set of edges $e_{vu_i}$ such that $u_i \in child(v)$ and $|TM2[u_i]| = |TM1[u_i]| + 1$.  
\end{definition} 

To compute $TM1[v]$ for $T_v$, the algorithm first computes the feasible edges $F_v$ for $T_v$. The algorithm then finds the maximum cardinality subset $F'_v \subseteq F_v$ such that no two edges in $F'_v$ are overlapping with each other.  $TM1[v]$ is then computed using the following equation.

\begin{equation}\label{eqn1}
	TM1[v] := (\bigcup_{\forall u_i \in child(v) \land e_{vu_i} \in F'_v} TM2[u_i]) \;\cup\; (\bigcup_{\forall u_i \in child(v) \land  e_{vu_i} \notin F'_v} TM1[u_i])	   	
\end{equation}

Note that if $v$ is a leaf vertex, $child(v) = \emptyset$. Thus by Equation \ref{eqn1}, $TM1[v] = \emptyset$ for any $T_v$ where $v$ is a leaf vertex.

We next describe the method for computing $TM2[v]$ for any non-root vertex $v$, when for all $T_{u_i}$, $u_i \in child(v)$, $TM1[u_i]$ and $TM2[u_i]$ are already computed. We first define the following.

\begin{definition}
	\textbf{Compatible Edges for $T_v$:} Compatible edges for $T_v$, denoted by $C_v$, is the set of edges $e_{vu_i}$ such that $e_{vu_i} \in F_v$ and $e_{vu_i}$ is non-overlapping with $e_{vp(v)}$.  
\end{definition} 

To compute $TM2[v]$ for $T_v$, the algorithm first computes the set of compatible edges $C_v$ for $T_v$. The algorithm then computes the maximum cardinality subset $C'_v \subseteq C_v$ such that no two edges in $C'_v$ are overlapping with each other. $TM2[v]$ is then computed using the following equation.

\begin{equation} \label{eqn2}
	TM2[v] :=   e_{vp(v)} \cup (\bigcup_{\forall u_i \in child(v) \land e_{vu_i} \in C'_v} TM2[u_i]) \;\cup\; (\bigcup_{\forall u_i \in child(v) \land e_{vu_i} \notin C'_v} TM1[u_i])
\end{equation}

Note that, if $v$ is a leaf vertex, $child(v) = \emptyset$. Thus by Equation \ref{eqn2}, $TM2[v] = \{e_{vp(v)}\}$ for any $T_v$ where $v$ is a leaf vertex. 

For the root vertex $r$ of  $\mathcal{G}$, $p(r) = NULL$, and hence $T_r$ is not defined as the edge $e_{rp(r)}$ does not exist. However, for simplicity, we still compute $TM1[r]$ for $r$ using Equation \ref{eqn1} as the parent of $v$ is not required for computation of $TM1[v]$ for any vertex $v$. The algorithm returns $TM1[r]$ as a maximum 0-1 timed matching for $\mathcal{G}$.

For example, in Figure \ref{exampleTree}, while computing $TM1[b]$ for $T_b$, $TM1[g] = \emptyset$, $TM2[g] = \{e_{bg}\}$ for $T_g$, and $TM1[h] = \emptyset$, $TM2[h] = \{e_{bh}\}$ for $T_h$ where $g$ and $h$ are two children of $b$. Thus by Equation \ref{eqn1}, $TM1[b] = \{e_{bg}\}$ or $\{e_{bh}\}$ and by Equation \ref{eqn2}, $TM2[b] = \{e_{bg}, e_{rb}\}$. Similarly for $T_a$, $TM1[a] = \{e_{ci}, e_{cj}, e_{dl}, e_{dq}\}$ and $TM2[a] = \{e_{ar}, e_{ci}, e_{cj}, e_{dl}, e_{dq}\}$ and for $T_b$, $TM1[b] = \{e_{bh}\}$ and $TM2[b] = \{e_{br}, e_{bg}\}$. Thus, $TM1[r] = \{e_{ar}, e_{ci}, e_{cj}, e_{dl}, e_{dq}, e_{br}, e_{bg}\}$ which is a maximum 0-1 timed matching for the rooted temporal tree shown in Figure \ref{exampleTree}. 

\begin{figure}[t]
	\begin{center}
		\includegraphics[scale=.45]{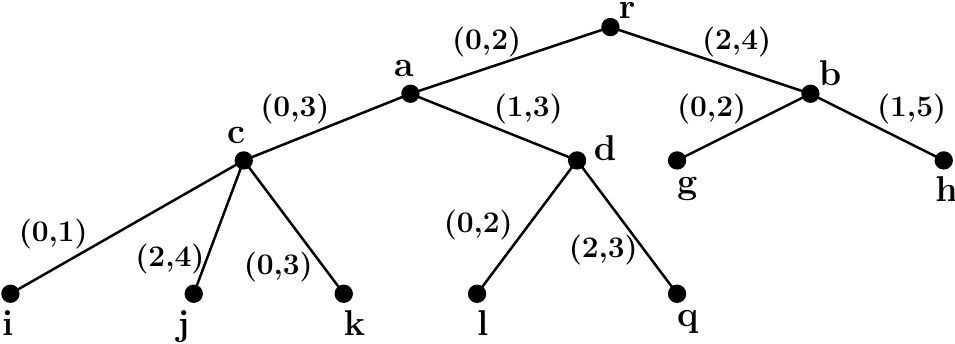}
		\caption{A temporal tree rooted at $r$}
		\label{exampleTree}
	\end{center} 
\end{figure}

Algorithm \ref{alg:dpMatching} describes the pseudocode of the proposed algorithm. The algorithm invokes a function {\em createLevelList} to put all the vertices in $\mathcal{G}$ in different lists, with two vertices put in the same list if and only if their depths in $\mathcal{G}$ are the same. Each $T_v$ is processed in non-increasing order of the depths of vertex $v$ in $\mathcal{G}$, starting from the vertex with the maximum depth. For each $T_v$, the algorithm computes $TM1[v]$ and $TM2[v]$ following Equations \ref{eqn1} and \ref{eqn2} respectively. It returns $TM1[r]$ as a maximum 0-1 timed matching for the given rooted temporal tree where $r$ is the root vertex. While computing $TM1[v]$ and $TM2[v]$ for any $T_v$, Algorithm \ref{alg:dpMatching} internally invokes two other functions. Function $maxNonOverlap(S)$ returns the maximum cardinality non-overlapping subset of edges from a set of edges $S$. Each edge of the rooted temporal tree is associated with a single time interval. Thus, the problem of finding a maximum cardinality non-overlapping subset of edges of a given set of edges reduces to the interval scheduling problem \cite{kleinberg06,cormen09}. Hence, $maxNonOverlap(S)$ finds a maximum cardinality non-overlapping subset of edges from a set of edges $S$ in $O(|S| \log |S|)$ time using the greedy algorithm to solve the interval scheduling problem \cite{kleinberg06,cormen09}.

\begin{algorithm}[H]
	\footnotesize
	%\small
	\caption{dp0-1TimedMatching}
	\textbf{Input:} $\mathcal{G = (V, E)}$, root vertex $r$\\
	\textbf{Output:} $M \subseteq \mathcal{E}$, a maximum 0-1 timed matching for $\mathcal{G}$
	\label{alg:dpMatching}
	\begin{algorithmic}[1]
		\If{$r = NULL$ \textbf{or} $child(r) = \emptyset$}
		\State $M := \emptyset$; \; \textit{return($M$)}  
		\EndIf
		\State $nList :=$ createLevelList($r$)              \label{level1}
		\For{$level = max\_depth \to 0$} \Comment{$max\_depth = $ maximum depth of a vertex}  \label{level2}
		\While{$(v := nList[level].extractVertex()) != \emptyset$}
		\State $TM1[v] := \emptyset$; \; $TM2[v] := \emptyset$ \label{init}
		\State $F_v := \emptyset$; \; $C_v := \emptyset$
		\ForAll {$u_i \in child(v)$}              \label{sfv}
		\If{$|TM2[u_i]| = |TM1[u_i]|+1$}                      
		\State $F_v := F_v \cup \{e_{vu_i}\}$   \label{efv}
		\EndIf
		\EndFor
		\State $F'_v := maxNonOverlap(F_v)$
		\ForAll{$u_i \in child(v)$}                      \label{stm1}
		\If{$e_{vu_i} \in F'_v$}
		\State $TM1[v] := TM1[v] \cup TM2[u_i]$
		\Else
		\State $TM1[v] := TM1[v] \cup TM1[u_i]$ \label{etm1}
		\EndIf        
		\EndFor
		\If{$p(v) \neq NULL$}            \label{stm2}
		\ForAll{$e_{vu_i} \in F_v$ \textbf{and} $interSect(e_{vu_i}, e_{vp(v)}) = 0$} \label{cc1}
		\State $C_v := C_v \cup \{e_{vu_i}\}$          \label{cc2}
		\EndFor
		\State $C'_v := maxNonOverlap(C_v)$
		\State $TM2[v] := TM2[v] \cup \{e_{vp(v)}\}$              \label{addP}
		\ForAll{$u_i \in child(v)$}
		\If{$e_{vu_i} \in C'_v$}
		\State $TM2[v] := TM2[v] \cup TM2[u_i]$
		\Else
		\State $TM2[v] := TM2[v] \cup TM1[u_i]$ \label{etm2}
		\EndIf        
		\EndFor		            
		\EndIf
		\EndWhile
		\EndFor
		\State $M := TM1[r]$
		\State \textit{return($M$)}
	\end{algorithmic}
	
\end{algorithm}

\noindent The function $interSect(e_{uv}, e_{vw})$ returns $1$ if $e_{uv}, e_{vw}$ are overlapping with each other, returns $0$ otherwise.

\subsection{Proof of Correctness}

\begin{lemma}
	\label{lem:lessTM2}
	Given a rooted temporal tree $\mathcal{G}$, for any $T_v$ in $\mathcal{G}$ such that $p(v) \neq NULL$, if $|TM2[v]| < |TM1[v]| + 1$, then there exists a maximum 0-1 timed matching for $\mathcal{G}$ which does not include the edge $e_{vp(v)}$.
\end{lemma}

\begin{proof}
Let $M$ be a maximum 0-1 timed matching for $\mathcal{G}$. Since $\mathcal{G}$ is a tree, $M$ must include a 0-1 timed matching for $T_v$. Let $M^*_v$ be the 0-1 timed matching of $T_v$ included in $M$. Then two cases are possible:
\begin{itemize}
    \item {\em $M^*_v$ does not include $e_{vp(v)}$}: In this case, $M$ is a maximum 0-1 timed matching of $\mathcal{G}$ that does not include $e_{vp(v)}$. So the lemma holds.
    \item {\em $M^*_v$ includes $e_{vp(v)}$}: In this case, by definition of $TM2[v]$, $|M^*_v| \leq |TM2[v]|$. However, since $M$ is a maximum 0-1 timed matching for $\mathcal{G}$, $|M^*_v| = |TM2[v]|$, as otherwise we can replace $M^*_v$ by $TM2[v]$ in $M$ to get a larger 0-1 timed matching, which is a contradiction. Therefore, $|M^*_v| \leq |TM1[v]|$, as $|TM2[v]| < |TM1[v]| + 1$. Consider $M' = (M \setminus M^*_v) \cup TM1[v]$. Then $M'$ is a 0-1 timed matching of $\mathcal{G}$ and $|M'| \geq |M|$. However, $|M'|$ cannot be greater than $|M|$ as $M$ is a maximum 0-1 timed matching of $\mathcal{G}$. Hence, $|M'| = |M|$. Hence, $M'$ is a maximum 0-1 timed matching of $\mathcal{G}$. Also, $M'$ does not include the edge $e_{vp(v)}$. Hence,  there exists a maximum 0-1 timed matching of $\mathcal{G}$ that does not include the edge $e_{vp(v)}$.
\end{itemize}
	
\end{proof}

\noindent The following lemma follows from Lemma \ref{lem:lessTM2}

\begin{lemma}
\label{lem:lessTM2_TV}
Given a rooted temporal tree $\mathcal{G}$, for any $T_v$ in $\mathcal{G}$ such that $p(v) \neq NULL$, if $|TM2[v]| < |TM1[v]| + 1$, then $TM1[v]$ is a maximum 0-1 timed matching for $T_v$.
\end{lemma}

\begin{lemma}
	\label{lem:maxTM1}
	Suppose that $\mathcal{G = (V, E)}$ is a rooted temporal tree such that $v \in \mathcal{V}$, for each $T_{u_i}$, $u_i \in child(v)$, $TM1[u_i]$ and $TM2[u_i]$ are already computed. Then, Equation \ref{eqn1} correctly computes $TM1[v]$ for $T_v$.  
\end{lemma}

\begin{proof}
    We partition the set $child(v)$ into the following two sets $A$ and $B$:
    \begin{itemize}
        \item $A = \{ u_i \;|\; u_i \in child(v) \;\wedge\; |TM2[u_i]| < |TM1[u_i]| + 1\}$
        \item $B = \{ u_i \;|\; u_i \in child(v) \;\wedge\; |TM2[u_i]| = |TM1[u_i]| + 1\}$
    \end{itemize}
    Note that $A \cup B = child(v)$ \footnote{ For any $T_{u_i}$, if $|TM2[u_i]| > |TM1[u_i]| + 1$, we can get a 0-1 timed matching $M'' = TM2[u_i] \setminus \{e_{p(u_i)u_i}\}$ for $T_{u_i}$ such that $M''$ does not include edge $e_{p(u_i)u_i}$ and $|TM1[u_i]| < |M''|$. According to definition of $TM1[u_i]$, this is impossible.} and $A \cap B = \emptyset$. Also, by Definition \ref{def:feasible}, $F_v = \{e_{vu_i}\;|\; u_i \in B\}$. Hence, for all nodes $u_i \in A$, $e_{vu_i} \notin F_v$.
    
    By definition, $TM1[v]$ is a maximum 0-1 timed matching for $T_v$ when edge $e_{vp(v)}$ is not included. Since $T_v$ is a temporal tree, for each $u_i \in child(v)$, $TM1[v]$ should include the maximum possible sized 0-1 timed matching for  $T_{u_i}$, without violating the properties of a 0-1 timed matching. Two cases are possible for including the maximum 0-1 timed matching for $T_{u_i}$ for a node $u_i \in child(v)$:
    \begin{itemize}
        \item{Case 1: $u_i \in A$:} In this case, From Lemma \ref{lem:lessTM2_TV}, $TM1[u_i]$ is a maximum 0-1 timed matching for $T_{u_i}$.  Also, since $TM1[u_i]$ does not include the edge $e_{vu_i}$, inclusion of $TM1[u_i]$ does not violate the 0-1 timed matching property of $TM1[v]$ irrespective of any choice made for including either $TM1[u_k]$ or $TM2[u_k]$ for any other $T_{u_k}$, $u_k \in child(v)$. Hence, in this case, $TM1[u_i]$ should be included in $TM1[v]$.
        \item{Case 2: $u_i \in B$:} In this case, $|TM2[u_i]| > |TM1[u_i]|$ and hence, $TM2[u_i]$ should be included in $TM1[v]$ instead of $TM1[u_i]$ if possible. However, including $TM2[u_k]$ for all $T_{u_k}$, $u_k \in B$ may violate the 0-1 timed matching property as $e_{vu_x}, e_{vu_y}$ can be overlapping for two distinct trees $T_{u_x}, T_{u_y}$, such that $u_x, u_y \in B$. Since $|TM2[u_i]| = |TM1[u_i]| + 1$ in this case, the maximum number of edges can be included in $TM1[v]$ from the 0-1 timed matchings of trees $T_{u_i}$, $u_i \in B$ in the following way: (i) include $TM2[u_r]$ in $TM1[v]$ for the maximum possible number of trees $T_{u_r}$, $u_r \in B$ such that there are no overlapping edges, and (ii) include $TM1[u_s]$ in $TM1[v]$ for the remaining trees $T_{u_s}$, $u_s \in B$.
    \end{itemize}
    
     Equation \ref{eqn1} includes $TM1[u_i]$ for all $u_i \in A$  in $TM1[v]$ (as required in Case 1 above for nodes in $A$) as it includes $TM1[u_i]$ for all trees $T_{u_i}$, $u_i \in child(v)$ such that $e_{vu_i} \notin F_v'$ (since for all $u_i \in A$, $e_{vu_i} \notin F_v$ and hence $e_{vu_i} \notin F'_v$). Also, $F_v'$ is the maximum cardinality subset of $F_v$ with no overlapping edges, and Equation \ref{eqn1} also includes $TM2[u_i]$ for all trees $T_{u_i}$, such that $e_{vu_i} \in F_v'$. Hence, since $F_v = \{e_{vu_i}\,|\, u_i \in B\}$,  Equation \ref{eqn1} includes (as required in Case 2 above for nodes in $B$) $TM2[u_r]$ for maximum possible number of trees $T_{u_r}$, $u_r \in B$. Also, for all other trees $T_{u_s}$, $u_s \in B$ it includes $TM1[u_s]$ as $u_s \notin F_v'$ (as required in Case 2 above).  Hence Equation \ref{eqn1} correctly chooses the 0-1 timed matchings for all $u_i \in child(v)$ for maximizing the size of $TM1[v]$. Hence Equation \ref{eqn1} computes $TM1[v]$ correctly. 
\end{proof}

\noindent The following lemma can be proved using similar arguments used to prove Lemma \ref{lem:maxTM1}.

\begin{lemma}
	\label{lem:maxTM2}
	Suppose that $\mathcal{G = (V, E)}$ is a rooted temporal tree such that $v \in \mathcal{V}$, for each $T_{u_i}$, $u_i \in child(v)$, $TM1[u_i]$ and $TM2[u_i]$ are already computed. Then, Equation \ref{eqn2} correctly computes $TM2[v]$ for $T_v$.  
\end{lemma}

\begin{theorem}
	Algorithm \ref{alg:dpMatching} correctly computes a maximum 0-1 timed matching for a given rooted temporal tree $\mathcal{G = (V, E)}$. 
	\label{thm:correctness}
\end{theorem}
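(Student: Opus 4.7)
My plan is to prove the theorem by strong induction on the height of the rooted temporal subtree $T_v$ that the algorithm is currently processing. To make the induction carry the hypothesis needed by Lemmas \ref{emptyTM2} and \ref{maxTM2}, I would strengthen the inductive statement: when the algorithm finishes processing a node $v$, the stored set $M_v$ is a maximum 0-1 timed matching for $T_v$, and, in addition, no other maximum 0-1 timed matching $M'_v$ for $T_v$ satisfies $|R_{M'_v}(e_{P(v)v})| = 0$ while $|R_{M_v}(e_{P(v)v})| > 0$. This extra invariant is exactly what Lemmas \ref{emptyTM2} and \ref{maxTM2} demand of the children's matchings when combined at the parent, so it must be threaded through the induction.

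The base case is routine: a leaf $u$ has no edges in $T_u$, so $M_u = \emptyset$ satisfies both parts vacuously. For the inductive step I would fix a non-leaf node $v$ at depth $max\_depth - (l+1)$ and assume the strengthened claim for every child's subtree $T_{u_i}$, whose height is at most $l$. That gives precisely the hypothesis of Lemmas \ref{emptyTM2}, \ref{maxTM2}, and \ref{lem:feasible}. Lemma \ref{lem:feasible} yields the correctness of $F_v$; Equation \ref{eqn1} together with the maximality of the $M_{u_i}$'s gives that $TM1[v]$ is a maximum 0-1 timed matching for $T_v$ that uses no edge incident on $v$; and the two cases $F_v = \emptyset$ and $F_v \neq \emptyset$, handled by Lemmas \ref{emptyTM2} and \ref{maxTM2} respectively, show that $cardMax(TM1[v], TM2[v])$ returns a maximum 0-1 timed matching for $T_v$ overall.

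The main obstacle, and the step I would work on most carefully, is re-establishing the invariant for $M_v$ relative to the edge $e_{P(v)v}$ so that the induction can take another step. The key observation is that any edge of $M_v$ overlapping $e_{P(v)v}$ must itself be incident on $v$: edges lying strictly inside some $T_{u_i}$ share no endpoint with $e_{P(v)v}$ and therefore cannot overlap it. Hence when $M_v = TM1[v]$ we automatically have $R_{M_v}(e_{P(v)v}) = \emptyset$, and the invariant is immediate. When $M_v = TM2[v]$ and $R_{M_v}(e_{P(v)v})$ is nonempty, the very definition of a \emph{maximum feasible set} forbids the existence of an allowable set at $v$ of the same cardinality as $F_v$ whose edges are non-overlapping with $e_{P(v)v}$, ruling out a better $M'_v$ that uses edges at $v$. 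To handle the residual case of a $TM1[v]$ of equal cardinality, I would specify that $cardMax$ breaks ties in favour of $TM1[v]$, which preserves the invariant at no cost to optimality. Applying the strengthened claim at the root $r$, where $P(r) = \phi$ makes the second part vacuous, then gives the theorem.
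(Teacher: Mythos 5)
Your proposal follows essentially the same route as the paper's own proof: induction on the height of the subtree with exactly the same strengthened invariant (no equally large matching $M'_v$ with $|R_{M'_v}(e_{vP(v)})| = 0$ while $|R_{M_v}(e_{vP(v)})| > 0$), with the inductive step discharged through Lemmas \ref{emptyTM2}, \ref{maxTM2} and \ref{lem:feasible} and the observation that only edges incident on $v$ can overlap $e_{vP(v)}$. The one deviation, your tie-breaking rule for $cardMax$, is unnecessary (though harmless): whenever $F_v \neq \emptyset$ we have $TM2[v] = TM1[v] \cup F_v$ with $F_v$ disjoint from $TM1[v]$, so $|TM2[v]| > |TM1[v]|$ and the equal-cardinality case you guard against never arises.
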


\begin{proof}
	We first prove that the algorithm correctly computes $TM1[v]$ and $TM2[v]$ for each $T_v$ where $v \in \mathcal{V}$. We prove this by induction on the height of the temporal tree $T_v$. 
	
	\textit{\textbf{Base Case:}} {\em The height of $T_v$ is 1}: Algorithm \ref{alg:dpMatching}, in Line \ref{init}, initializes the values of $TM1[v]$ and $TM2[v]$ to $\emptyset$ for any $T_v$. As height of $T_v$ is 1, $v$ is a leaf vertex. Thus, the computed $TM1[v] = \emptyset$ and $TM2[v] = \{e_{vp(v)}\}$ (added at Line \ref{addP}) are correct.
	
	\textit{\textbf{Inductive Step:}} Assume that Algorithm \ref{alg:dpMatching} correctly computes $TM1[x]$ and $TM2[x]$ for any temporal tree $T_x$ with height up to $l$. We need to prove that Algorithm \ref{alg:dpMatching} correctly computes $TM1[v]$ and $TM2[v]$ for temporal tree $T_v$ with height $l+1$.
	Since the height of $T_v$ is $l+1$, the depth of $v$ is at most $max\_depth - l$, where $max\_depth$ is the maximum depth of any vertex in $\mathcal{G}$. Then, for any  $w \in child(v)$, the depth of $w$ is at most $max\_depth - l + 1$, and hence, the height of $w$ is at most $l$. Since the algorithm processes vertices in non-increasing order of their depths (Lines \ref{level1} and \ref{level2} in Algorithm \ref{alg:dpMatching}), when $TM1[v]$ and $TM2[v]$ are computed, $TM1[w]$ and $TM2[w]$ are already computed correctly for all $T_w, w \in child(v)$ by the induction hypothesis. Hence, by Lemma \ref{lem:maxTM1} and \ref{lem:maxTM2}, $TM1[v]$ and $TM2[v]$ are correctly computed using Equation \ref{eqn1} (Lines \ref{sfv} to \ref{etm1} in Algorithm 1) and Equation \ref{eqn2} (Lines \ref{sfv} to \ref{efv}, and Lines \ref{stm2} to \ref{etm2} in Algorithm 1). 
	
	Thus, Algorithm \ref{alg:dpMatching} correctly computes $TM1[r]$ where $r$ is the root vertex of $\mathcal{G}$ and returns it as a maximum 0-1 timed matching for $\mathcal{G}$. Hence the theorem holds.
\end{proof}

\begin{theorem}
	The time complexity of Algorithm \ref{alg:dpMatching} is $O(n\log n)$.
	\label{thm:runtime}
\end{theorem}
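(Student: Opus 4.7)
The plan is to bound the work done at each node processed by Algorithm \ref{alg:dpMatching} (together with its subroutine Algorithm \ref{alg:feasibleSet}) and then sum over all nodes. I would start with two structural observations that will be used repeatedly: since $\mathcal{G}$ is a tree with $n$ nodes, $|\mathcal{E}| = n - 1$, and hence for every node $v$ the bounds $|M_v| \leq n - 1$, $d_v = |child(v)| \leq n - 1$, and $|TM1[v]|, |TM2[v]| \leq n-1$ hold. The preprocessing phase (the createLevelList call, implemented by a simple BFS from $r$, plus the initialization of $TM1[u_i]$ and $TM2[u_i]$ for every leaf $u_i$) clearly runs in $O(n)$ total time.

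Next I would bound the cost incurred by the main loop at a single non-leaf node $v$. Computing $TM1[v] = \bigcup_{u_i \in child(v)} M_{u_i}$ is a union of pairwise disjoint sets (since the $T_{u_i}$'s share no edges), so it takes $O(\sum_{u_i \in child(v)} |M_{u_i}|) = O(n)$ time. The call to $computeFeasibleSet$ then performs (i) one call to $interSect(e_{vu_i}, M_{u_i})$ per child, each costing $O(|M_{u_i}|)$, for a combined cost of $O(\sum_{u_i \in child(v)} |M_{u_i}|) = O(n)$; (ii) the construction of $\tilde{\mathcal{R}}_0(v)$ and, when $P(v) \neq \phi$, of $\tilde{\mathcal{R}}^P_0(v)$, in $O(d_v) = O(n)$; and (iii) at most two calls to $maxNonOverlap$ on sets of at most $d_v + 1 \leq n$ edges, each of which can be carried out in $O(n^2)$ time by pairwise overlap comparisons. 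The subsequent computation of $TM2[v]$ is another set union costing $O(n)$, and $cardMax$ takes $O(n)$.

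Consequently the work performed at any single node $v$ is $O(n^2)$, dominated by the $maxNonOverlap$ step. Summing over all $n$ nodes of $\mathcal{G}$, the total running time of Algorithm \ref{alg:dpMatching} is $O(n \cdot n^2) = O(n^3)$, absorbing the $O(n)$ preprocessing cost. The only mildly delicate part of the argument is ensuring that every auxiliary set whose size enters the analysis is bounded by $n$; this is exactly what the structural observations at the start provide, since every such set is either a subset of $\mathcal{E}$ or is indexed by children of a node in $\mathcal{G}_U$.
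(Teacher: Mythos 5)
Your proposal is correct and follows essentially the same route as the paper: bound the work at each node by $O(n^2)$ (the paper charges this to the $interSect$ loop over children, you charge it to $maxNonOverlap$, and your tighter $O(n)$ total for the $interSect$ calls via disjointness of the $M_{u_i}$ is fine) and multiply by the $n$ nodes. One small caveat: justifying $maxNonOverlap$ ``by pairwise overlap comparisons'' is loose, since pairwise comparisons only build the conflict structure; the maximum non-overlapping subset is obtained because all edges in $\tilde{\mathcal{R}}_0(v)$ (resp.\ $\tilde{\mathcal{R}}^P_0(v)$) share node $v$ and carry a single interval, so the earliest-finish-time interval-scheduling greedy solves it in $O(n \log n)$ (as the paper states), which stays within your $O(n^2)$ per-node budget.
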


\begin{proof}
	Algorithm \ref{alg:dpMatching} stores the vertices in the rooted temporal tree $\mathcal{G = (V, E)}$ in different lists according to their depth in $\mathcal{G}$ such that two vertices with the same depth are in the same list. This can be done in $O(n)$ time by traversing the rooted temporal tree in level order where $|\mathcal{V}| = n$. 
	
	While computing $TM1[v_i]$ and $TM2[v_i]$ for $T_{v_i}$ where $v_i \in \mathcal{V}$,  $TM1[v_j]$ and $TM2[v_j]$ of $T_{v_j}$ are available for all $v_j \in child(v_i)$. In this scenario, as intermediate steps, Algorithm \ref{alg:dpMatching} computes $F_{v_i}$ and $C_{v_i}$ at Lines \ref{sfv} to \ref{efv} and Lines \ref{cc1} to \ref{cc2} respectively. The number of edges incident on any vertex $v_i$ is $d(v_i)$ where $d(v_i)$ is the degree of vertex $v_i$ in $\mathcal{G}_U$, and each edge is associated with a single time interval. Thus, computing $F_{v_i}$ and $C_{v_i}$ take $O(d(v_i))$ time. Since the maximum size of $F_{v_i}$ and $C_{v_i}$ are $O(d(v_i))$, finding the maximum cardinality non-overlapping subsets $F'_{v_i}$ and $C'_{v_i}$ take $O(d(v_i) \log d(v_i))$ time using the function $maxNonOverlap$. Thus, the computation of $TM1[v_i]$ and $TM2[v_i]$ for any $T_{v_i}$ takes $O(d(v_i) \log d(v_i))$ time. Hence, total time taken for computing $TM1[v_i]$ and $TM2[v_i]$ for all $T_{v_i}$, $v_i \in \mathcal{V}$, in the temporal tree is $\sum_{i=1}^n O(d(v_i) \log d(v_i))$ $=$ $O(\sum_{i=1}^n d(v_i) \log d(v_i))$. Since for any $v_i \in \mathcal{V}$, $d(v_i) < n$, $\sum_{i=1}^n d(v_i) \log d(v_i)  \leq  \sum_{i=1}^n d(v_i) \log n$. Hence, $\sum_{i=1}^n d(v_i) \log d(v_i) \leq  \log n \sum_{i=1}^n d(v_i)$. Since $\mathcal{G}_U$ is a tree, $\sum_{i = 1}^n d(v_i) = 2(n-1)$. Hence, $\sum_{i=1}^n d(v_i) \log d(v_i) \leq 2(n - 1) \log n$. Hence, the time complexity of Algorithm \ref{alg:dpMatching} is $O(n \log n)$.
\end{proof}

\section{Finding a Maximum 0-1 Timed Matching for Temporal Graphs}
\label{bipartite}
In this section, we address the problem of finding a maximum 0-1 timed matching for a given temporal graph which is not a tree. We first analyse the computational complexity of the problem. After that, we analyse the approximation hardness of the problem of finding a maximum 0-1 timed matching for a given temporal graph when multiple time intervals are associated with each edge. Finally, we propose an approximation algorithm for the problem.

\subsection{Complexity of Finding Maximum 0-1 Timed Matching in Temporal Graphs}
 We first prove that the decision version of the problem of finding a maximum 0-1 timed matching for temporal graphs when each edge is associated with multiple time intervals, referred to as D-MAX-0-1-TM-MULT, is in NP. The D-MAX-0-1-TM-MULT problem is defined as follows.

\begin{definition}
	\textbf{D-MAX-0-1-TM-MULT:} Given a temporal graph $\mathcal{G = (V, E)}$ with lifetime $\mathcal{T}$, where each edge in $\mathcal{E}$ is associated with multiple time intervals, and a positive integer $f$, does there exist a 0-1 timed matching $M$ for $\mathcal{G}$ such that $|M| = f$?
\end{definition}
	
\begin{lemma}
	\label{lem:tmtMultNP}
	The D-MAX-0-1-TM-MULT problem is in NP.
\end{lemma}	 

\begin{proof}
	Consider a certificate $\langle \langle \mathcal{G = (V,E)}, f \rangle, M\rangle$, where $\mathcal{G}$ is a temporal graph with lifetime $\mathcal{T}$ such that each edge is associated with multiple time intervals, $f$ is a given positive integer, and $M$ is a given set of edges. We consider one edge $e_{uv} \in M$ at a time and compare the associated time intervals of $e_{uv}$ with associated time intervals of all the other edges in $M$. We perform this check for all the edges in $M$ to find if there exists any pair of edges overlapping with each other. As discussed in Section \ref{model}, the maximum number of intervals associated with an edge is $\lfloor \frac{\mathcal{T}}{2} \rfloor$. Hence, this checking can be done in polynomial time. Whether $|M| =  f$ and $M \subseteq \mathcal{E}$ can also be easily checked in polynomial time. Hence, the problem of finding a maximum 0-1 timed matching for temporal graphs when each edge is associated with multiple time intervals is in NP.
\end{proof}

We have already proved in Theorem \ref{thm:NPTree} that the problem of finding a maximum 0-1 timed matching for a given rooted temporal tree when each edge is associated with multiple time intervals is NP-complete. Thus, from Theorem \ref{thm:NPTree} and Lemma \ref{lem:tmtMultNP}, we get the following result:

\begin{corollary}
The problem of finding a maximum 0-1 timed matching for temporal graphs when each edge is associated with multiple time intervals is NP-complete.
\end{corollary}

Next, we investigate the computational complexity of the problem of finding a  maximum 0-1 timed matching for temporal graphs when each edge is associated with a single time interval. We prove that this problem is also NP-complete. In order to show that, we prove that this problem is NP-complete even when the given temporal graph is a degree-bounded bipartite temporal graph such that degree of each vertex is bounded by $3$ and each edge is associated with a single time interval. We refer to this problem as D-MAX-0-1-TMBD3B-1 problem. We first define the D-MAX-0-1-TMBD3B-1 problem. In the rest of this paper, we refer to a bounded degree bipartite temporal graph when degree of each vertex is bounded by $3$ as BDG3B.

\begin{definition}
 \textbf{D-MAX-0-1-TMBD3B-1:} Given a BDG3B $\mathcal{G = (V, E)}$ with lifetime $\mathcal{T}$, where each edge in $\mathcal{E}$ is associated with a single time interval, and a positive integer $z$, does there exist a 0-1 timed matching $M$ for $\mathcal{G}$ such that $|M| = z$?
\end{definition}

Next, we prove NP-completeness of the D-MAX-0-1-TMBD3B-1 problem by showing that there is a polynomial time reduction from the 2P2N-3SAT problem \cite{Caragiannis19} to the D-MAX-0-1-TMBD3B-1 problem. The 2P2N-3SAT problem is known to be NP-complete \cite{Caragiannis19}. The 2P2N-3SAT problem is defined as follows.

\begin{definition}
 \textbf{2P2N-3SAT:} Let $U=\{v_1, v_2, \cdots, v_m\}$ be a set of $m$ boolean variables and let $\Psi$ be a 3-CNF formula with $d$ clauses $C_1, C_2, \cdots, C_d$ such that each variable occurs exactly twice positively and twice negatively in $\Psi$. Does there exist a truth assignment which satisfies $\Psi$? 
\end{definition}

\begin{theorem}
 The D-MAX-0-1-TMBD3B-1 problem is NP-complete.
 \label{thm:bdg3}
\end{theorem}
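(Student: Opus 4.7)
The plan is to prove NP-completeness in two steps: membership in NP, followed by a polynomial-time reduction from 2P2N-3SAT. Membership in NP is immediate and parallels the argument used in Theorem \ref{thm:NPTree}: given a candidate subset $M \subseteq \mathcal{E}$ with $|M|=r$, we can verify in polynomial time that every pair of edges in $M$ sharing an endpoint has disjoint (single) time intervals, and that $|M|=r$.

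For hardness, I would reduce from an instance $\Psi$ of 2P2N-3SAT with variables $\{v_1,\ldots,v_m\}$ and clauses $\{C_1,\ldots,C_d\}$ to an instance $\langle \mathcal{G(V,E)}, r\rangle$ of D-MAX-0-1-TMBD3B-1. For each variable $v_i$ I construct a \emph{variable gadget} consisting of the four literal-occurrence nodes $\ell_{i,1}^{+}, \ell_{i,2}^{+}, \ell_{i,1}^{-}, \ell_{i,2}^{-}$ arranged in a 4-cycle, with the two positive-occurrence nodes on one side of the bipartition and the two negative-occurrence nodes on the other. The four cycle edges carry two alternating time intervals chosen so that the cycle admits exactly two maximum 0-1 timed matchings of size~$2$: one encoding $v_i = \textsf{true}$ and one encoding $v_i = \textsf{false}$. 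The critical design point is that the true-configuration occupies the time slot at the negative-occurrence nodes that is shared by the external negative-literal connections (blocking them), while leaving free the slot at the positive-occurrence nodes used by external positive-literal connections, and symmetrically for the false-configuration. For each clause $C_j$ I introduce a \emph{clause node} $c_j$ of degree exactly $3$ connected to its three literal-occurrence nodes via short chains whose parities are chosen so that $c_j$ lies consistently on one side of the bipartition regardless of the signs of its literals. All three edges incident on $c_j$ share a common time interval, so that at most one of them can enter any 0-1 timed matching at $c_j$.

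I would then set the target $r$ precisely so that a satisfying assignment of $\Psi$ corresponds to a matching of size $r$ built by (i) picking the truth-consistent pair of cycle edges in every variable gadget, (ii) picking along every literal chain the maximal edge set that is compatible with the variable's truth value, and (iii) selecting exactly one $c_j$-incident edge per clause, which must correspond to a true literal. The converse direction argues that any 0-1 timed matching of size $r$ must saturate every variable gadget (pinning a truth assignment), saturate every literal chain (enforcing consistency between the assignment and the chain's matched edges), and include an edge at every clause node (certifying a true literal per clause), from which a satisfying assignment is read off. The construction has size polynomial in $|\Psi|$, so the reduction is polynomial time.

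The principal obstacle I expect is the simultaneous enforcement of bipartiteness, the degree-$3$ bound, and a clean matching-size accounting. Positive and negative literal nodes naturally lie on opposite sides of the bipartition, so the chain-length parities from literals to clause nodes have to be adjusted to keep every $c_j$ on one side; and the degree-$3$ budget leaves no slack at literal nodes (already of degree $2$ in the cycle) or at clause nodes (degree exactly $3$). Consequently the time labels along each chain must be designed so that the matching contribution differs by exactly one between the "literal true" and "literal false" cases---too much or too little slack would break the equivalence between satisfiability and the target size. Once the gadget labelling is pinned down, the polynomial-time bound on the reduction and the target-size correctness reduce to a routine counting check.
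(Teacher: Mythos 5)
Your NP-membership argument is fine, and reducing from 2P2N-3SAT is the same starting point as the paper, but your hardness direction is only a plan: the entire burden of the proof sits in the gadget details you explicitly leave open (the time labels on the 4-cycles and on the literal-to-clause chains, the chain parities, and the exact target-size accounting), and as stated those details have concrete failure modes. In a 0-1 timed matching two edges conflict only if they share a node \emph{and} overlap in time, so your variable 4-cycle admits ``exactly two maximum configurations of size $2$'' only if adjacent cycle edges are made to overlap at their shared literal node; but each cycle edge must simultaneously overlap the external slot at its negative endpoint and avoid the external slot at its positive endpoint (or vice versa), and each edge carries a \emph{single} interval in this problem, so it is not at all clear that the required overlap pattern is realizable, and if adjacent cycle edges fail to overlap the gadget admits a timed matching of size $4$, which wrecks both the intended truth-encoding and your size count. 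Likewise, truth propagation along the chains requires consecutive chain edges to overlap at every internal node, and the claim that the chain's contribution ``differs by exactly one'' between the two truth values is never established; without it the converse direction (``any matching of size $r$ saturates every gadget and pins a consistent assignment'') does not follow. So there is a genuine gap: the construction is not pinned down, and the equivalence with satisfiability is not proved.

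For comparison, the paper's reduction avoids cycles and chains entirely. It creates one node $c_i$ per clause and, per variable $v_i$, a selector node $a_i$ and two literal nodes $b_i,\bar b_i$; the clause node $c_i$ is joined to the (at most three) literal nodes of its literals by edges alive only during $(i-1,i)$, while $e_{a_ib_i}$ and $e_{a_i\bar b_i}$ are alive during $(0,d)$. The graph is bipartite with $A_1\cup A_2$ versus $B$, degrees are at most $3$ because each variable occurs exactly twice positively and twice negatively, and the target is $r=m+d$. The whole forcing argument is then one observation: all edges incident on any node of $A_1\cup A_2$ pairwise overlap in time, so a matching of size $m+d$ must pick exactly one edge at each such node; the edge at $a_i$ encodes the truth value of $v_i$, and the edge at $c_k$ certifies a true literal of $C_k$ because it time-conflicts with the opposite choice at $a_i$. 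If you want to salvage your approach you would need to fully specify and verify the interval assignments and the counting; otherwise the paper's direct construction is both simpler and complete.
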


\begin{proof}
We first show that the D-MAX-0-1-TMBD3B-1 problem is in NP. Consider a certificate $\langle \langle \mathcal{G = (V,E)},$ $z \rangle, M\rangle$, where $\mathcal{G}$ is a BDG3B with lifetime $\mathcal{T}$ such that each edge in $\mathcal{E}$ is associated with a single time interval, $z$ is a given integer, and $M$ is a given set of edges. We consider one edge $e_{uv} \in M$ at a time and compare the associated time interval of $e_{uv}$ with associated time intervals of all the other edges in $M$. We perform this check for all the edges in $M$ to find if there exist any edges overlapping with each other. This checking can be done in polynomial time. Whether $|M| =  z$ and $M \subseteq \mathcal{E}$ can also be easily checked in polynomial time. Hence, the D-MAX-0-1-TMBD3B-1 problem is in NP.

Next, we prove that, there is a polynomial time reduction from the 2P2N-3SAT problem to the D-MAX-0-1-TMBD3B-1 problem. Let $\langle U, \Psi \rangle$ be an instance of the 2P2N-3SAT problem where $U=\{v_1, v_2, \cdots, v_m\}$ be a set of $m$ variables and $\Psi$ be a 3-CNF formula with $d = \frac{4m}{3}$ clauses $C_1, C_2, \cdots, C_d$ such that each clause $C_i$ consists of exactly $3$ literals and each variable occurs exactly twice positively and twice negatively in $\Psi$. A literal is a variable or negation of a variable in $U$. Without loss of generality, we assume that each clause in $\Psi$ consists of distinct literals. From this instance of the 2P2N-3SAT problem, we construct an instance of the D-MAX-0-1-TMBD3B-1 problem as follows.
\begin{itemize}
    \item We construct a temporal graph $\mathcal{G = (V, E)}$ as follows:
    \begin{enumerate}
        \item We add a vertex $c_i$ to a set $A_1$ for each clause $C_i$ in $\Psi$. Thus,
                    
            $A_1 := \{c_i\,|\, \forall C_i \in \Psi\}$
          
            We add a vertex $a_i$ to a set $A_2$ for each variable $v_i$ in $U$. Thus,
                    
            $A_2 := \{a_i\,|\, \forall v_i \in U\}$ 
            
            We add two vertices $b_i, \Bar{b}_i$ to a set $B$ for each variable $v_i$ in $U$. Thus,
                
            $B := \{b_i, \Bar{b}_i\,|\, \forall v_i \in U\}$
            
            Then the set of vertices $\mathcal{V}$ in the temporal graph $\mathcal{G}$ is
            
            $\mathcal{V} := A \cup B$ where $A := A_1 \cup A_2$
        \item We add an edge between vertex $c_i$ and vertex $b_j$ which exists for time interval $(i-1, i)$, if $v_j$ is a literal in clause $C_i$. Thus,
        
            $\mathcal{E}_{cb}$ $:= \{e(c_i, b_j, (i-1, i))\,|\, \forall v_j \in U, C_i \in \Psi,$ $v_j$ is a literal in $C_i\}$
            
            We add an edge between vertex $c_i$ and vertex $\bar{b}_j$ which exists for time interval $(i-1, i)$, if $\bar{v}_j$ is a literal in clause $C_i$. Thus,
            
            $\mathcal{E}_{c\bar{b}}$ $:= \{e(c_i, \bar{b}_j, (i-1, i))\,|\,\forall v_j \in U, C_i \in \Psi,$ $\bar{v}_j$ is a literal in $C_i\}$
            
            We add an edge between each vertex $a_i$ and vertex $b_i$ which exists for time interval $(0, d)$. Thus,
            
            $\mathcal{E}_{ab}$ $:= \{e(a_i, b_i, (0, d))\,|\, \forall a_i, b_i \in \mathcal{V}\}$
            
            We add an edge between each vertex $a_i$ and vertex $\bar{b}_i$ which exists for time interval $(0, d)$. Thus,
            
            $\mathcal{E}_{a\bar{b}}$ $:= \{e(a_i, \bar{b}_i, (0, d))\,|\, \forall a_i, \bar{b}_i \in \mathcal{V}\}$
            
            Then the set of edges $\mathcal{E}$ in the temporal graph $\mathcal{G}$ is
            
            $\mathcal{E} := \mathcal{E}_{cb} \cup \mathcal{E}_{c\bar{b}} \cup \mathcal{E}_{ab} \cup \mathcal{E}_{a\bar{b}}$

        \item $\mathcal{T} := d$
    \end{enumerate}            
    \item $z := d+m$
\end{itemize}
 As there are exactly 3 literals in each clause $C_j$ and for each variable $v_i$, literals $v_i$ and $\bar{v}_i$ are present in exactly two clauses each, $\mathcal{G}$ is a BDG3B. Any edge $e(u, v, (s, f)) \in \mathcal{E}$ is also denoted as $e_{uv}$ when the time interval for which this edge exists is not important. Figure \ref{fig:bounded} shows the construction of a BDG3B from an instance of the 2P2N-3SAT problem. 

\begin{figure}
 \begin{center}
  \includegraphics[scale=.52]{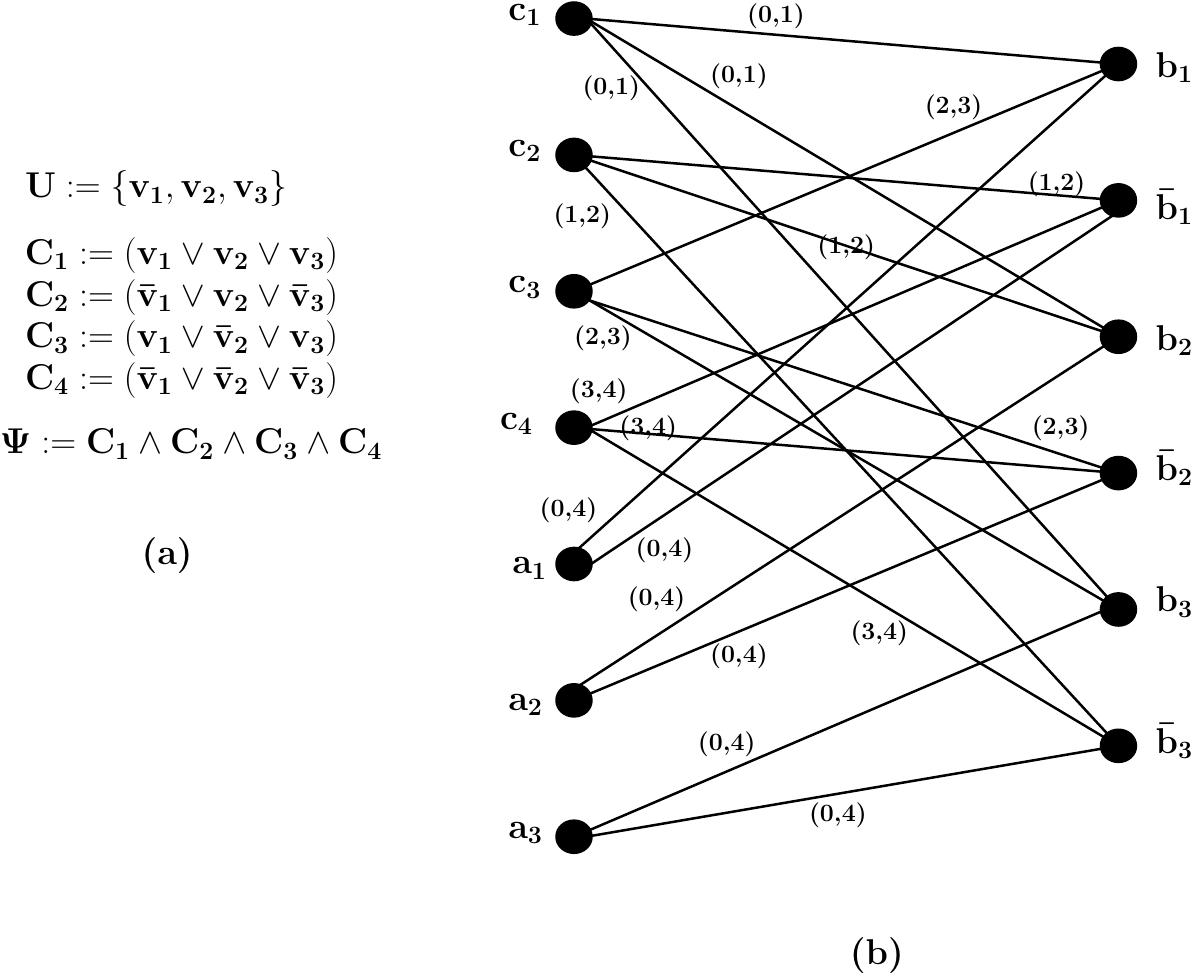}
  \caption{Construction of a BDG3B from the input of an instance of the 2P2N-3SAT problem.}
  \label{fig:bounded}
 \end{center} 
 \end{figure}

We first prove that if there is a solution for the D-MAX-0-1-TMBD3B-1 problem, then there is a truth assignment which satisfies $\Psi$. For a 0-1 timed matching $M$, $|M| = z$, for $\mathcal{G}$, we construct a truth assignment $\mathcal{S}$ for $\Psi$ as follows. 
\begin{enumerate}
    \item For each $e_{a_ib_i} \in M$, assign $v_i := \mathbf{false}$. \label{R1}
    \item For each $e_{a_i\bar{b}_i} \in M$, assign $v_i := \mathbf{true}$. \label{R2}
\end{enumerate}

We first show that, for each vertex $v \in A$, there is exactly one edge incident on $v$ is in $M$. According to our construction, $\mathcal{G}$ is bipartite and $|A| = m+d$. Again, any two edges incident on the same vertex in $A$ are overlapping with each other. Thus, no two edges incident on the same vertex in $A$ can be in $M$. As $|M| = m+d$ and $|A| = m+d$, for each vertex $v \in A$, there is exactly one edge incident on $v$ in $M$. This shows that, for each vertex $a_i \in A_2$ there is exactly one edge incident on $a_i$ in $M$. This implies that, each variable $v_i \in U$ is assigned to a truth value and no variable in $U$ is assigned to both $\mathbf{true}$ and $\mathbf{false}$.  

Next we show that $\mathcal{S}$ satisfies all the clauses in $\Psi$. For any clause $C_k$, we show that at least one literal  $v_i \in C_k$ or $\bar{v}_i \in C_k$ is satisfied. We have shown that, for each vertex $c_k \in A_1$, there is one edge incident on $c_k$ in $M$. There are two possible cases:

\begin{enumerate}[I.]
    \item {\em Some edge $e_{c_kb_i}$ incident on $c_k$ is in $M$:} This implies that if variable $v_i$ is set to \textbf{true}, the clause $C_k$ is satisfied. According to the truth assignment in $\mathcal{S}$, if $e_{a_i\bar{b}_i}$ is in $M$, then $v_i$ is set to \textbf{true}. Again we have already proved that, for each $a_i \in A_2$, there is an edge incident on $a_i$ in $M$. If $e_{a_ib_i}$ is in $M$, then $e_{c_kb_i}$ cannot be in $M$ because these two edges are overlapping with each other. Hence, $e_{a_i\bar{b}_i} \in M$ and $C_k$ is satisfied.
    
    \item {\em Some edge $e_{c_k\bar{b}_i}$ incident on $c_k$ is in $M$:} This implies that if variable $v_i$ is set to \textbf{false}, the clause $C_k$ is satisfied. According to the truth assignment in $\mathcal{S}$, if $e_{a_ib_i}$ is in $M$, then $v_i$ is set to \textbf{false}. Again we have already proved that, for each $a_i \in A_2$, there is an edge incident on $a_i$ in $M$. If $e_{a_i\bar{b}_i}$ is in $M$, then $e_{c_k\bar{b}_i}$ cannot be in $M$ because these two edges are overlapping with each other. Hence, $e_{a_ib_i} \in M$ and $C_k$ is satisfied.
\end{enumerate}

Thus, each clause $C_k$ in $\Psi$ is satisfied by $\mathcal{S}$ and each variable in $U$ is assigned to either \textbf{true} or \textbf{false}. Hence, $\mathcal{S}$ is a truth assignment which satisfies $\Psi$. 

Next, we prove that if there is no solution for the instance of the D-MAX-0-1-TMBD3B-1 problem, then there is no truth assignment which satisfies $\Psi$. We prove this by showing that if there is a truth assignment which satisfies $\Psi$, then there is a solution for the instance of the D-MAX-0-1-TMBD3B-1 problem. For a satisfying truth assignment $\mathcal{S}$ of $\Psi$, we construct a 0-1 timed matching $M$ of size $m+d$ for $\mathcal{G}$ as follows. 

\begin{enumerate}
	\item For any variable $v_i \in U$, if $v_i = \mathbf{false}$ is in $\mathcal{S}$, we include $e_{a_ib_i}$ in $M$.
	\item For any variable $v_i \in U$, if $v_i = \mathbf{true}$ is in $\mathcal{S}$, we include $e_{a_i\bar{b}_i}$ in $M$.
    \item For any clause $C_l$ in $\Psi$, we select any one literal $v_i$ in $C_l$ such that $v_i$ evaluates to $\textbf{true}$ following assignment $\mathcal{S}$, then we include $e_{c_lb_i}$ in $M$. If the selected literal is $\bar{v}_i$ in $C_l$ such that $\bar{v}_i$ evaluates to $\textbf{true}$ following assignment $\mathcal{S}$, then we include $e_{c_l\bar{b}_i}$ in $M$. Note that, for each clause $C_l$, we include exactly one such edge incident on $c_l$ in $M$.
\end{enumerate}

We first prove that $M$ is a 0-1 timed matching for $\mathcal{G}$. We prove this by contradiction. Assume that, $M$ is not a 0-1 timed matching for $\mathcal{G}$. Then there are at least two edges in $M$, which are overlapping with each other. According to the construction of $\mathcal{G}$, any two edges incident on two different vertices in $A_1$ are non-overlapping with each other. $M$ includes only one edge incident on each vertex in $A_1$. As each variable $v_i$ is assigned to either $\mathbf{true}$ or $\mathbf{false}$, only one edge incident on a vertex in $A_2$ is included in $M$. Thus, two edges can overlap in following two cases:
\begin{enumerate}[I.]
    \item There is some vertex $b_i \in B$ such that $e_{a_ib_i}$ and some edge $e_{c_jb_i}$ are both included in $M$. According to the construction of $M$, $e_{a_ib_i} \in M$ implies that $v_i$ is assigned to $\mathbf{false}$. Again $e_{c_jb_i} \in M$ implies that, $v_i = \mathbf{true}$ and $v_i \in C_j$. Thus, $e_{a_ib_i}$ and $e_{c_jb_i}$ both can be included in $M$ only when $v_i$ is assigned to both $\mathbf{true}$ and $\mathbf{false}$. This is a contradiction.
    \item There is some vertex $\bar{b}_i \in B$ such that $e_{a_i\bar{b}_i}$ and some edge $e_{c_j\bar{b}_i}$ are both included in $M$. According to the construction of $M$, $e_{a_i\bar{b}_i} \in M$ implies that $v_i$ is assigned to $\mathbf{true}$. Again $e_{c_j\bar{b}_i} \in M$ implies that, $v_i = \mathbf{false}$ and $\bar{v}_i \in C_j$. Thus, $e_{a_i\bar{b}_i}$ and $e_{c_j\bar{b}_i}$ both can be included in $M$ only when $v_i$ is assigned to both $\mathbf{true}$ and $\mathbf{false}$. This is also a contradiction.
\end{enumerate}

Next, we prove that $|M| = m+d = z$. We prove this by contradiction. There can be following two possible cases:
\begin{enumerate}[I.]
	\item  {\em $|M| > m+d$:} According to the construction of $M$, for each vertex $v \in A$, only one edge incident on $v$ is included in $M$. As $\mathcal{G}$ is bipartite and $|A| = m+d$, $|M| > m+d$ is impossible.
	\item {\em $|M| < m+d$:} $\mathcal{S}$ satisfies $\Psi$. There are $d$ clauses in $\Psi$. Hence, according to the construction of $M$, one edge incident on each vertex in $A_1$ is included in $M$. Thus $|M| < m+d$ implies that, there is at least one vertex $a_i \in A_2$ such that no edge incident on $a_i$ is included in $M$. This implies that, there is a variable $v_i \in U$ which is not assigned to any truth value in $\mathcal{S}$. This is a contradiction.
\end{enumerate}
 Hence, $|M| = m+d = z$. This completes the proof.  
\end{proof}

\noindent The following result directly follows from Theorem \ref{thm:bdg3}. 

\begin{corollary}
The problem of finding a maximum 0-1 timed matching for temporal graphs when each edge is associated with a single time interval is NP-complete.
\end{corollary}

\subsection{Approximation Hardness of Finding Maximum 0-1 Timed Matching for Temporal Graphs}
Next, we study the approximation hardness of the problem of finding a maximum 0-1 timed matching for a temporal graph when multiple time intervals are associated with each edge. In order to show inapproximability of the problem of finding maximum 0-1 timed matching for temporal graphs, we prove that there is no $\frac{1}{n^{1-\epsilon}}$, for any $\epsilon > 0$, factor approximation algorithm for finding a maximum 0-1 timed matching for a rooted temporal tree when each edge is associated with multiple time intervals unless NP = ZPP. We refer to the problem of finding a maximum 0-1 timed matching for a rooted temporal tree when multiple time intervals are associated with each edge as the MAX-0-1-TMT-MULT problem. We prove this by showing that there exists an approximation preserving reduction from the maximum independent set (MAX-IS) problem to the MAX-0-1-TMT-MULT problem. It is already known that there is no $\frac{1}{n^{1-\epsilon}}$, for any $\epsilon > 0$, factor approximation algorithm for the MAX-IS problem unless NP = ZPP \cite{Hastad96,Berman99}. We define these two problems first.

\begin{definition}
 \textbf{MAX-0-1-TMT-MULT:} Given a rooted temporal tree $\mathcal{G = (V, E)}$ with lifetime $\mathcal{T}$, where each edge in $\mathcal{E}$ is associated with multiple time intervals, find a maximum sized 0-1 timed matching $M$ for $\mathcal{G}$.
\end{definition}

 \begin{definition}
  \textbf{MAX-IS:} Given a static graph $G = (V, E)$, find a maximum sized set $I \subseteq V$ such that no two vertices in $I$ are connected by an edge in $E$.
 \end{definition}

 \begin{theorem}
 There exists an approximation preserving reduction from the MAX-IS problem to the MAX-0-1-TMT-MULT problem.  
  \label{thm:hardApprox}
 \end{theorem}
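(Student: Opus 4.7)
The plan is to construct, from a given graph $G(V,E)$ with $V=\{v_1,\dots,v_n\}$ and $E=\{\tilde e_1,\dots,\tilde e_m\}$, a rooted temporal tree $\mathcal{G}(\mathcal{V},\mathcal{E})$ that encodes all nonadjacencies of $G$ as nonoverlaps of a set of sibling edges. Specifically, I take $\mathcal{V}=\{r,u_1,\dots,u_n\}$ with $r$ as the root, and for each $v_i\in V$ I create one tree edge $\hat e_i$ between $u_i$ and $r$. The only engineering left is to assign time intervals to each $\hat e_i$ so that (i) $\hat e_i$ and $\hat e_j$ overlap iff $(v_i,v_j)\in E$, and (ii) each $\hat e_i$ carries at least two time intervals so that the instance falls inside MAX-0-1-TMT-MULT.

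For the first requirement, I would iterate over the edges of $G$: for the $k$-th edge $\tilde e_k=(v_i,v_j)\in E$, append the interval $(k-1,k)$ to both $\hat e_i$ and $\hat e_j$. By construction, any two tree edges $\hat e_i,\hat e_j$ share a common timestep iff the corresponding vertices are adjacent in $G$. For the second requirement, I would append, after the time window $[0,m)$ used above, two disjoint ``personal'' intervals to each $\hat e_i$ of the form $(m+2i-2,\,m+2i-1)$ and $(m+2n+2i-2,\,m+2n+2i-1)$. These personal intervals are unique to each edge and hence do not create any overlap; they only ensure $\hat e_i$ has $\ge 2$ time intervals (covering the isolated-vertex case as well). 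The lifetime is $\mathcal{T}=m+4n$, and $|\mathcal{V}|=n+1$, so the construction is polynomial in the size of $G$.

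Next I would establish the size-preserving bijection. Given an independent set $I\subseteq V$, define $M_I=\{\hat e_i:v_i\in I\}$; since no two vertices in $I$ are adjacent, no two edges in $M_I$ share any timestep, so $M_I$ is a 0-1 timed matching in $\mathcal{G}$ with $|M_I|=|I|$. Conversely, given a 0-1 timed matching $M$ in $\mathcal{G}$, define $I_M=\{v_i:\hat e_i\in M\}$; since the edges in $M$ are pairwise nonoverlapping, no two vertices of $I_M$ can be joined by any $\tilde e_k$, so $I_M$ is an independent set in $G$ with $|I_M|=|M|$. This shows the optima coincide: $\alpha(G)=|M^*|$, where $M^*$ is a maximum 0-1 timed matching of $\mathcal{G}$.

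Finally, the reduction is approximation preserving: any $\beta$-approximate 0-1 timed matching in $\mathcal{G}$ maps back (via the bijection above) to a $\beta$-approximate independent set of $G$, and the number of nodes blows up only by $1$. I do not foresee a significant obstacle; the only subtlety is handling edges that would otherwise carry a single interval (vertices with degree $0$ or $1$ in $G$), which is cleanly resolved by the personal intervals. Combining this approximation-preserving reduction with the known $n^{1-\epsilon}$ inapproximability of MAX-IS (unless NP$=$ZPP) will then yield, in the corollary, the claimed inapproximability of MAX-0-1-TMT-MULT on rooted temporal trees.
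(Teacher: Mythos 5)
Your construction is essentially the paper's reduction: one tree edge per vertex attached to a common root, with a distinct timestep per edge of $G$ shared by the two endpoints' tree edges, so that 0-1 timed matchings of size $k$ correspond exactly to independent sets of size $k$. The argument is correct (and your extra ``personal'' intervals guaranteeing at least two intervals per tree edge, plus the explicit proof of both directions of the correspondence, are harmless refinements of what the paper does).
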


 \begin{proof}
 Consider an instance $\langle G = (V, E) \rangle$ of the MAX-IS problem where $G$ is a static graph with $|V| = n$, and $|E| = m$. For our reduction, we assume that each edge $(u,v) \in E$ is labelled with a distinct integer $a_{uv}$ between $0$ to $m-1$. We assume that the set $V_0$ of $0$ degree vertices, $V_0 \subseteq V$, is given and $|V_0| = n_0$. If $V_0$ is not given, it can be easily computed in polynomial time. We also assume that, each vertex $v \in V_0$ is labelled with a distinct integer $a_v$, between $m$ to $m+n_0-1$. For each vertex $v \in V$, the {\em neighbouring vertex set of $v$, $N_v$}, includes each vertex $w_i$ such that edge $(v, w_i) \in E$. Note that, for any vertex $u \in V_0$, $N_u = \emptyset$. 

From the given instance of the MAX-IS problem, we construct an instance of the MAX-0-1-TMT-MULT problem as follows.
 \begin{itemize}
  \item We construct the temporal graph $\mathcal{G = (V, E)}$ as follows
  \begin{itemize}
     \item We add a vertex $\nu(v)$ for each vertex $v \in V$. Additionally add a vertex $r$. Thus,
     
     $\mathcal{V} := \{\nu(v)\;|\; \forall v \in V\} \cup \{r\}$.
     \item We add an edge between each $\nu(v)$ added to $\mathcal{V}$ for each vertex $v \in V_0$ to $r$. This edge exists for the time interval $(a_v, a_v+1)$. Thus,
    
     $\mathcal{E}_{deg0}$ $:=$ $\{ e(\nu(v), r, (a_v, a_v+1))$ $|$ $\forall v \in V_0\}$
    
     \item We add an edge between each $\nu(v)$ added to $\mathcal{V}$ for each vertex $v \in V \setminus V_0$ and $r$. This edge exists for the time intervals $(a_{vw_1}, a_{vw_1}+1), (a_{vw_2}, a_{vw_2}+1), \cdots, (a_{vw_k}, a_{vw_k}+1)$ where each $w_i \in N_v$, $k$ is the degree of $v$ ($deg(v)$), and each $a_{vw_i}$ is the integer labelling edge $(v, w_i)$ incident on $v$. Thus,
     
     $\mathcal{E}_{degk}$ $:=$ $\{e(\nu(v), r, (a_{vw_1}, a_{vw_2}+1), \cdots, (a_{vw_k}, a_{vw_k}+1))$ $|$ $\forall v \in (V \setminus V_0),\, \forall w_i \in N_v,\, k = deg(v)\}$
     \item Then the set of edges $\mathcal{E}$ in the temporal graph $\mathcal{G}$ is
     
     $\mathcal{E} := \mathcal{E}_{deg0} \cup \mathcal{E}_{degk}$

     \item Lifetime $\mathcal{T} = m+n_0$
 \end{itemize}
\end{itemize}

 \begin{figure}
 \begin{center}
  \includegraphics[scale=.55]{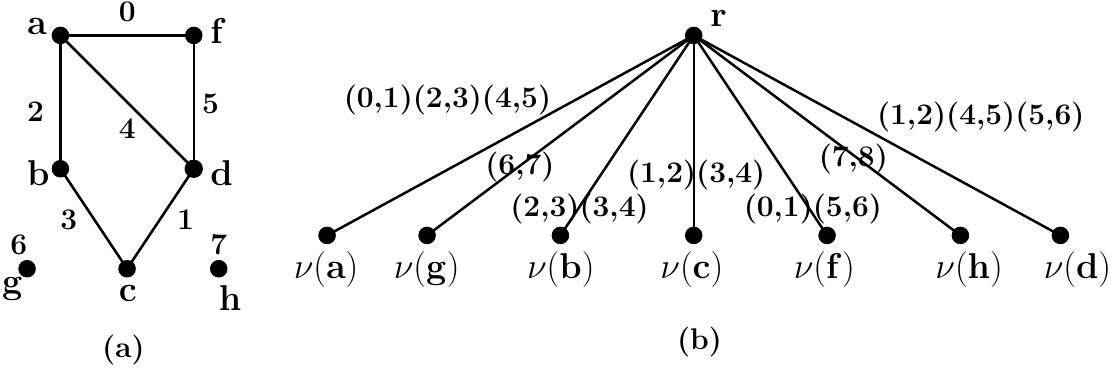}
  \caption{(a) A static graph where edges are labelled by integers, g and h are two $0$ degree vertices which are also labelled with integers (b) Corresponding temporal tree}
  \label{fig:indSet}
 \end{center} 
 \end{figure}

There is an edge between each vertex in $\mathcal{V} \setminus \{r\}$ and $r$, and there is no edge between any other vertices. Thus, $\mathcal{G}$ is a temporal tree rooted at $r$ (we choose $r$ as the root of $\mathcal{G}$). Any edge $e(u, v, (s_1, f_1), (s_2, f_2), \cdots, (s_i, f_i)) \in \mathcal{E}$ is also denoted as $e_{uv}$ when the time intervals for which this edge exists are not important. Figure \ref{fig:indSet} shows the construction of a rooted temporal tree from a given static graph where each edge in the temporal tree is associated with multiple time intervals. 

For a 0-1 timed matching $M$ for $\mathcal{G}$, we construct a solution $I$ for the MAX-IS problem on $G$ such that $|I| = |M|$ as follows. Consider the set of vertices $I = \{v\,|\,e_{r\nu(v)} \in M \}$. We show that $I$ is an independent set for $G$ and $|I| = |M|$. As for each edge in $M$ a vertex in $I$ is selected, $|I| = |M|$. We prove that $I$ is an independent set for $G$ by contradiction. Assume that $I$ is not an independent set for $G$. This implies that there are at least two vertices $u, v  \in I$ such that $(u, v) \in E$. As $u, v \in I$, both $e_{\nu(u)r}$ and $e_{\nu(v)r}$ are in $M$. As $(u,v) \in E$, $(u, v)$ is labelled with integer $a_{uv}$, according to our construction, the time interval $(a_{uv}, a_{uv}+1)$ is associated with both $e_{\nu(u)r}$ and $e_{\nu(v)r}$, and both are incident on $r$. This implies that $M$ is not a 0-1 timed matching for $\mathcal{G}$. This is a contradiction. This completes the proof of the theorem. 
\end{proof}

Therefore, from Theorem \ref{thm:hardApprox} and using the approximation hardness result of the MAX-IS \cite{Hastad96,Berman99} problem, we conclude that, there is no $\frac{1}{n^{1-\epsilon}}$, for any $\epsilon > 0$, factor approximation algorithm for the MAX-0-1-TMT-MULT problem unless NP = ZPP. From this result for a rooted temporal tree, we get the following result:

\begin{corollary}
There is no $\frac{1}{n^{1-\epsilon}}$, for any $\epsilon > 0$, factor approximation algorithm for the problem of finding maximum 0-1 timed matching problem in temporal graphs when each edge is associated with multiple time intervals unless NP=ZPP.
\end{corollary}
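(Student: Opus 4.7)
The plan is to derive this corollary as an immediate consequence of Theorem \ref{thm:hardApprox} together with the observation that the class of rooted temporal trees is a subclass of the class of general temporal graphs. Since the MAX-0-1-TMT-MULT problem (which is defined over rooted temporal trees) is a restriction of the problem on general temporal graphs to a specific input class, any approximation algorithm for the general problem automatically yields an approximation algorithm with the same approximation ratio for MAX-0-1-TMT-MULT.

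First, I would note that Theorem \ref{thm:hardApprox} constructs an approximation preserving reduction from MAX-IS to MAX-0-1-TMT-MULT. Combined with the known hardness of MAX-IS, due to H{\aa}stad and Berman-Karpinski \cite{Hastad96,Berman99}, which states that there is no $\frac{1}{n^{1-\epsilon}}$-factor approximation algorithm for MAX-IS for any $\epsilon > 0$ unless NP = ZPP, this directly rules out the existence of any such approximation algorithm for MAX-0-1-TMT-MULT on rooted temporal trees. In particular, the temporal tree produced in the reduction has $n+1$ nodes when starting from a MAX-IS instance on $n$ nodes, so the hardness bound of the form $\frac{1}{n^{1-\epsilon}}$ carries over to the temporal graph setting with the same asymptotic exponent (since $(n+1)^{1-\epsilon} = \Theta(n^{1-\epsilon})$).

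Next, I would formalize the containment argument: suppose, for contradiction, that there exists a $\frac{1}{n^{1-\epsilon}}$-factor approximation algorithm $\mathcal{A}$ for the maximum 0-1 timed matching problem on arbitrary temporal graphs where each edge is associated with multiple time intervals. Given any rooted temporal tree $\mathcal{G(V, E)}$ with multiple time intervals per edge (which is a valid instance of MAX-0-1-TMT-MULT), I could simply run $\mathcal{A}$ directly on $\mathcal{G}$, treating it as a temporal graph. The notion of 0-1 timed matching does not depend on whether the underlying graph is a tree or not, so the output would be a valid $\frac{1}{n^{1-\epsilon}}$-approximate 0-1 timed matching for $\mathcal{G}$. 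This would yield a $\frac{1}{n^{1-\epsilon}}$-factor approximation algorithm for MAX-0-1-TMT-MULT, contradicting the hardness implied by Theorem \ref{thm:hardApprox} under the assumption NP $\neq$ ZPP.

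The main obstacle here is essentially non-existent, as this is a straightforward inheritance-of-hardness argument: a lower bound on a restricted problem automatically implies the same lower bound on any more general problem. The only minor point of care is ensuring that the parameter $n$ (the number of nodes) is treated consistently between the tree and general-graph formulations, which is handled by noting that the reduction of Theorem \ref{thm:hardApprox} preserves the order of magnitude of the node count.
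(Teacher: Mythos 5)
Your proposal is correct and follows essentially the same route as the paper: the paper likewise combines Theorem \ref{thm:hardApprox} with the known MAX-IS inapproximability to rule out a $\frac{1}{n^{1-\epsilon}}$-factor approximation for MAX-0-1-TMT-MULT on rooted temporal trees, and then lifts this to general temporal graphs by the observation that rooted temporal trees are a special case. Your extra remarks on the $n+1$ versus $n$ node count and the explicit restriction argument are fine elaborations of the same idea (note the conclusion should, as in the paper's surrounding text, be stated under the assumption NP $\neq$ ZPP).
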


\subsection{Approximation Algorithm for Finding Maximum 0-1 Timed Matching for Temporal Graphs}
In this section, we present an approximation algorithm for finding a maximum 0-1 timed matching for temporal graphs. We first show that there exists an approximation preserving reduction from the maximum 0-1 timed matching (MAX-0-1-TM-MULT) problem  to the maximum independent set (MAX-IS) problem. We first define the MAX-0-1-TM-MULT problem.

\begin{definition}
	\textbf{MAX-0-1-TM-MULT:} Given a temporal graph $\mathcal{G = (V, E)}$ with lifetime $\mathcal{T}$, where each edge in $\mathcal{E}$ is associated with multiple time intervals, find a maximum sized 0-1 timed matching $M$ for $\mathcal{G}$.
\end{definition}

\begin{theorem}
	\label{thm:approxAlgo}
	There exists an approximation preserving reduction from  the MAX-0-1-TM-MULT problem to the MAX-IS problem.
\end{theorem}

\begin{proof}
	Consider an instance $\langle \mathcal{G = (V, E)} \rangle$ of MAX-0-1-TM-MULT problem where $\mathcal{G}$ is a temporal graph with $|\mathcal{V}| = n$, $|\mathcal{E}| = m$ and lifetime $\mathcal{T}$. We construct an instance of the MAX-IS problem as follows.
	
	We construct the static graph $G = (V, E)$ from $\mathcal{G = (V, E)}$ in the following way:
	
	\begin{itemize}
		\item For each edge $e_{uv} \in \mathcal{E}$, add a vertex $\nu_{uv} \in V$. Thus,
		
		$V := \{ \nu_{uv} \,|\, \forall e_{uv} \in \mathcal{E}\}$
		\item Add an edge $(\nu_{uv}, \nu_{vw})$ between $\nu_{uv}$ and $\nu_{vw}$ if $e_{uv}$ and $e_{vw}$ are overlapping with each other. Thus,
		
		$E := \{(\nu_{uv}, \nu_{vw}) \,|\, e_{uv}, e_{vw} \in \mathcal{E}$ are overlapping with each other$\}$     
	\end{itemize} 
	
	Given an independent set $I$ for $G$, we can construct a 0-1 timed matching $M$ for $\mathcal{G}$ as $ M = \{e_{uv} \,|\, \nu_{uv} \in I\}$. We show that $M$ is a 0-1 timed matching for $\mathcal{G}$ and $|I| = |M|$. As $M$ includes an edge for each vertex in $I$, $|I| = |M|$. We prove that $M$ is a 0-1 timed matching for $\mathcal{G}$ by contradiction. Let there be two edges $e_{uv}, e_{vw} \in M$ such that $e_{uv}$ and $e_{vw}$ are overlapping with each other. This implies that, $\nu_{uv}, \nu_{vw} \in I$  and $(\nu_{uv}, \nu_{vw}) \in E$. Thus, $I$ is not an independent set for $G$. This is a contradiction. Hence, $M$ is a 0-1 timed matching for $\mathcal{G}$. 
\end{proof}

In Theorem \ref{thm:approxAlgo}, it can be noted that the number of vertices in the constructed static graph $G = (V, E)$ is $|\mathcal{E}| = m$ and degree of each vertex in $V$ is the overlapping number of the corresponding edge in $\mathcal{E}$. Thus, the average degree of a vertex in $G$ is the average overlapping number of an edge in $\mathcal{G}$.

In \cite{halldorsson97}, an approximation algorithm is proposed to find a maximum independent set for a given graph $G$ with approximation ratio $\frac{5}{2d^* + 3}$ where $d^*$ is the average degree of a vertex in $G$. Thus, applying that algorithm on the static graph constructed from a given temporal graph using the steps described in Theorem \ref{thm:approxAlgo}, we can get the following result.

\begin{theorem}
	\label{thm:approx}
 There is an approximation algorithm to find a maximum 0-1 timed matching for a given temporal graph $\mathcal{G}$ with approximation ratio $\frac{5}{2\mathcal{N}^* + 3}$, where $\mathcal{N}^*$ is the average overlapping number of an edge in $\mathcal{G}$. 
 \end{theorem} 

It can be observed that the overlapping number of any edge $e_{uv} \in \mathcal{E}$ is dependent on the degree of $u$ and $v$ in the underlying graph $\mathcal{G}_U$. Thus, the average overlapping number $\mathcal{N}^*$ of edges in a temporal graph $\mathcal{G = (V, E)}$ is also dependent on the degree of vertices in $\mathcal{G}_U$. This indicates that this algorithm  produces a 0-1 timed matching with good approximation ratio for sparse graphs. The algorithm produces a 0-1 timed matching with good approximation ratio even when there exists a small number of edges for which the overlapping number is high if the average overlapping number of the edge in $\mathcal{E}$ is small.

It is proved earlier in Theorem \ref{thm:bdg3} that the problem of finding a maximum 0-1 timed matching for a  bounded degree bipartite temporal graph when degree of each vertex is bounded by 3 and each edge is associated with a single time interval is NP-complete. From Theorem \ref{thm:approx}, we obtain the following result for the problem of finding a maximum 0-1 timed matching for a given bounded degree temporal graph.

\begin{corollary}
	\label{cor:bounded}
	There is an approximation algorithm to find a maximum 0-1 timed matching  with approximation ratio $\frac{5}{2B+3}$ for a given bounded degree temporal graph where degree of each vertex is bounded by $B$ and each edge is associated with multiple time intervals. 
	%For a bounded degree temporal graph where the degree of each node is bounded by $3$, Algorithm \ref{alg:greedTM} is $\frac{1}{5}$ approximate algorithm. 
	Thus, for bounded degree temporal graphs where degree of each vertex is bounded by a constant, this is a constant factor approximation algorithm.
\end{corollary}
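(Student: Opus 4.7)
The plan is to read off Corollary \ref{cor:bounded} as a direct specialization of the preceding approximation theorem that states the approximation ratio of Algorithm \ref{alg:greedTM} is $\frac{1}{\mathcal{N}^*+1}$. So the only thing I really need to do is upper-bound $\mathcal{N}^*$ in terms of the degree bound $B$.

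First I would establish the edge-level bound: for any edge $e_{uv}\in\mathcal{E}$, the set of edges in $\mathcal{E}$ that can possibly overlap with $e_{uv}$ must, by the definition of overlapping edges in Section \ref{defin}, be incident on either $u$ or $v$. In the underlying graph $\mathcal{G}_U$, the number of such incident edges (other than $e_{uv}$ itself) is at most $(\deg_{\mathcal{G}_U}(u)-1)+(\deg_{\mathcal{G}_U}(v)-1)$. Since every node in $\mathcal{G}_U$ has degree at most $B$, this gives $\mathcal{N}(e_{uv})\le 2B-2$. Because this holds for every edge, the average overlapping number satisfies $\mathcal{N}^*\le 2B-2$ as well.

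Next I would plug this bound into the guarantee of the preceding theorem: the produced matching has size at least $\frac{1}{\mathcal{N}^*+1}$ times the maximum, and since $\mathcal{N}^*+1\le 2B-1$, the approximation ratio is at least $\frac{1}{2B-1}$. Finally, when $B$ is a constant the quantity $\frac{1}{2B-1}$ is also a constant, giving the second sentence of the corollary.

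There is no real obstacle here — the substantive work is already done in the earlier theorem and in the paragraph of the paper immediately preceding the corollary; the proof is essentially a one-line consequence of the degree bound, so I would keep it to a short paragraph making these three observations explicit.
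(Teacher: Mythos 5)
Your proposal is correct and follows the same route as the paper: bound $\mathcal{N}(e_{uv}) \le 2B-2$ via the degree bound on the endpoints in $\mathcal{G}_U$, deduce $\mathcal{N}^* \le 2B-2$, and substitute into the $\frac{1}{\mathcal{N}^*+1}$ guarantee of the preceding theorem to obtain the ratio $\frac{1}{2B-1}$, which is constant for constant $B$. This matches the paper's own derivation (stated there with $\Delta$ in place of $B$) in all essentials.
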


\section{Conclusion}
\label{conclusion}
In this paper, we have defined {\em 0-1 timed matching} on temporal graphs, and investigated the problem of finding a maximum 0-1 timed matching for different types of temporal graphs. We have proved that this problem is NP-complete for a rooted temporal tree when each edge is associated with $2$ or more time intervals, and proposed a $O(n \log n)$ time dynamic programming based algorithm  for a rooted temporal tree with $n$ vertices when each edge is associated with a single time interval. It is also proved that this problem is NP-complete for a bounded degree bipartite temporal graph where degree of each vertex is bounded by $3$ or more such that each edge is associated with a single time interval. We have also proved that there is no $\frac{1}{n^{1-\epsilon}}$, for any $\epsilon > 0$, factor approximation algorithm for the problem of finding a maximum 0-1 timed matching even for a rooted temporal tree when each edge is associated with multiple time intervals unless NP = ZPP. Then, we have proposed an approximation algorithm to address the problem for a temporal graph when each edge is associated with multiple time intervals. The work can be extended to consider the problem on other classes of temporal graphs.

\section*{Acknowledgement}

We gratefully acknowledge the comments and suggestions of the anonymous reviewers which have helped us to improve both the content and the presentation of the paper significantly.

\bibliographystyle{splncs03}
\bibliography{ref}

\end{document}